\title{Exploiting New Properties of String Net Frequency for Efficient Computation}
\newcommand{\unimelb}{%
School of Computing and Information Systems,
The University of Melbourne, 
Parkville,
Australia
}
\author{Peaker Guo}
{\unimelb}{zifengg@student.unimelb.edu.au}
{https://orcid.org/0000-0002-9098-1783}{}
\author{Patrick Eades}
{\unimelb}{patrick.f.eades@gmail.com}
{https://orcid.org/0000-0002-7369-1397}{}
\author{Anthony Wirth}
{\unimelb}{awirth@unimelb.edu.au}
{https://orcid.org/0000-0003-3746-6704}{}
\author{Justin Zobel}
{\unimelb}{jzobel@unimelb.edu.au}
{https://orcid.org/0000-0001-6622-032X}{}
\authorrunning{P. Guo, P. Eades, A. Wirth, and J. Zobel}
\keywords{
Fibonacci words,
suffix arrays,
Burrows-Wheeler transform,
LCP arrays,
irreducible LCP values,
coloured range listing
}
\newcommand{\reducedstrut}{%
    \vrule width 0pt height .9\ht\strutbox depth .9\dp\strutbox\relax%
}
\newcommand{\gray}[1]{%
    \begingroup
    \setlength{\fboxsep}{0pt}%
    \colorbox{lightgray!40}{\reducedstrut#1\/}%
    \endgroup
}
\newcommand\varcal[1]{\text{\usefont{OMS}{cmsy}{m}{n}#1}}
\newcommand{\bigO}{\varcal{O}}
\newcommand{\lcp}{\ensuremath{\mathit{LCP}}\xspace}
\newcommand{\sa}{\ensuremath{\mathit{SA}}\xspace}
\newcommand{\lf}{\ensuremath{\mathit{LF}}\xspace}
\newcommand{\isa}{\ensuremath{\mathit{ISA}}\xspace}
\newcommand{\bwt}{\ensuremath{\mathit{BWT}}\xspace}
\newcommand{\nf}{\ensuremath{\mathit{\phi}}\xspace}
\newcommand{\singlenf}{\textsc{single-nf}\xspace}
\newcommand{\allnf}{\textsc{all-nf}\xspace}
\newcommand{\allnfreport}{\textsc{all-nf-report}\xspace}
\newcommand{\allnfextract}{\textsc{all-nf-extract}\xspace}
\begin{document}

\maketitle

\begin{abstract}
Knowing which strings in a massive text are significant -- 
that is, which strings are common and distinct from other strings -- 
is valuable for several applications, including text compression and tokenization. 
Frequency in itself is not helpful for significance, because the commonest strings are the shortest strings. 
A compelling alternative is \emph{net frequency}, which has the property that 
strings with positive net frequency are of maximal length. 
However, net frequency remains relatively unexplored, and 
there is no prior art showing how to compute it efficiently.
We first introduce a characteristic of net frequency that 
simplifies the original definition. 
With this, we study strings with positive net frequency in Fibonacci words.
We then use our characteristic and solve two key problems related to net frequency. 
First, \singlenf, how to compute the net frequency of a given string of length $m$, 
in an input text of length $n$ over an alphabet size $\sigma$. 
Second, \allnf, given length-$n$ input text, how to report every string of 
positive net frequency (and its net frequency). 
Our methods leverage suffix arrays, components of 
the Burrows-Wheeler transform, and solution to the coloured range listing problem.
We show that, for both problems, our data structure has $\bigO(n)$ construction cost: 
with this structure, we solve \singlenf in $\bigO(m + \sigma)$ time 
and \allnf in $\bigO(n)$ time.
Experimentally, we find our method to be around~100 times faster 
than reasonable baselines for \singlenf.
For \allnf, our results show that,
even with prior knowledge of the set of strings with positive net frequency, 
simply confirming that their net frequency is positive
takes longer 
than with our purpose-designed method.
All in all, we show that net frequency is a cogent method for identifying significant strings.
We show how to calculate net frequency efficiently, and how to report efficiently the set of plausibly significant strings.
\end{abstract}

\newpage

\section{Introduction}\label{sec:intro}
When analysing, storing, manipulating, or working with text, 
identification of notable (or significant) strings is typically a key component.
These notable strings could form the basis of a dictionary for compression, 
be exploited by a tokenizer, or form the basis of trend detection.
Here, a \emph{text} is a sequence of characters drawn from a fixed alphabet,
such as a book, collection of articles, or a Web crawl.
A \emph{string} is a contiguous sub-sequence of the text; in this paper, we seek to
efficiently identify  notable strings.

Given a text,~$T$, and a string,~$S$, 
the \emph{frequency} of~$S$ is the number of occurrences of~$S$ in~$T$.
The frequency of a string is inherently a basis for its significance.
However, frequency is in some sense uninformative.
Sometimes a shorter string is frequent only because 
it is part of many different longer strings, or of a frequent longer string, 
or of many frequent longer strings.
That is, the frequency of a string may be inflated by 
the occurrences of the longer strings that contain it.
Moreover, every substring of a string of frequency~$f$ has frequency at least~$f$.
Indeed, the most frequent string in the text has length~$1$.

A compelling 
means of identifying notable strings is via \emph{net frequency}~(NF),
introduced by Lin and Yu~\cite{journal/jise/2001/lin}.
Let  $T = \texttt{r\gray{st}k\underline{\gray{st}}ca\gray{st}ar\gray{st}a\gray{st}\$}$
be an input text.
The highlighted string \texttt{st} has frequency five.
But to arrive at a more helpful notion of the frequency of the string \texttt{st}, the occurrences of \texttt{st}
in \emph{repeated longer strings} -- that is, \texttt{rst} and \texttt{ast} -- should be excluded, 
leaving one occurrence left (underlined).
Defined precisely in \cref{sec:net-freq}, net frequency (NF) captures this idea; indeed, the NF of \texttt{st} in  $T$ is~$1$.
For now,
strings with positive NF are those that are repeated in the text and are \emph{maximal}
(see \cref{thm:net-char}, below): if extended to either left or right the frequency of the extended string would be~$1$.
For the underlined \texttt{st} above, the frequency of both \texttt{kst} (left) and \texttt{stc} (right) is~$1$.

It is worth noting the difference between 
a string with positive NF and a 
\emph{maximal repeat}~\cite{journal/tcbb/2012/kulekci, conf/cpm/2020/pape-lange, journal/ipl/2001/raffinot}:
when extending a string with positive NF, the frequency of the extended string becomes 1,
whereas
when extending a maximal repeat, the frequency of the extended string decreases,
but does not necessarily become 1.

\subparagraph{Motivation.}
NF has been demonstrated to be useful in tasks such as
Chinese phoneme-to-character (and character-to-phoneme) conversion, 
the determination of prosodic segments in a Chinese sentence for text-to-speech output,
and Chinese toneless phoneme-to-character conversion for
Chinese spelling error correction~\cite{journal/jise/2001/lin, journal/ijclclp/2004/lin}.
NF could also be complementary to tasks such as parsing in NLP and
structure discovery in genomic strings.
However, even though the original paper on NF~\cite{journal/jise/2001/lin} 
suggested that ``suitable indexing can be used to improve efficiency'', 
efficient structures and algorithms for NF were not explicitly described.
In this work, we bridge this gap by delving into the properties of NF. 
Through these properties, we introduce efficient algorithms for computing NF.

\subparagraph{Problem definition.}
Throughout, $T$ is our length-$n$ input text and $S$ a length-$m$ string in~$T$.
We consider two problems relating to computing NF in $T$: 
the Single-string Net Frequency problem \singlenf 
and the All-strings Net Frequency problem \allnf:
\begin{itemize}
\item \underline{\singlenf}: given a text, $T$, and a query string,~$S$, report the NF of~$S$ in~$T$.
\item \underline{\allnf}: given a text,~$T$, identify each string that has positive NF in~$T$.
Concretely, the identification could be one of the following two forms.
\underline{\allnfreport}: for each string of positive NF, \emph{report} one occurrence
and its NF; or \underline{\allnfextract}:
\emph{extract} a multiset, where each element is a string with positive NF
and its multiplicity is its~NF.
\end{itemize}

\subparagraph{Our contribution.}
In this work, we first reconceptualise NF through our new characteristic 
that simplifies the original definition.
We then apply it and identify strings with positive NF in Fibonacci words.
For \singlenf, we introduce an $\bigO(m + \sigma)$ time algorithm,
where $m$ is the length of the query and $\sigma$ is the size of the alphabet.
This is achieved via several augmentation to suffix array 
from LF mapping to LCP array, 
as well as solution to the coloured range listing problem.
For \allnf, we establish a connection to branching strings and LCP intervals,
then solve \allnfreport in~$\bigO(n)$ time, 
and \allnfextract in $\bigO(n \log \delta)$ time,
where
 $\delta$ is a repetitiveness measure
defined as
$\delta := \max\left\{ S(k)/k : k \in [n] \right\}$
and
$S(k)$ denotes the number of distinct strings of length $k$ in $T$.
The cost is bounded by making a connection to irreducible LCP values.
We also conducted extensive experiments and demonstrated the efficiency of our algorithms empirically.
The code for our experiments is available at 
\url{https://github.com/peakergzf/string-net-frequency}.

\section{Preliminaries}

\subparagraph{Strings.}
Let $\Sigma$ be a finite alphabet of size $\sigma$. 
Given a character,~$x$, and two strings,~$S$ and~$T$, 
some of their possible concatenations
are written as~$xS$,~$Sx$,~$\mathit{ST}$, and~$\mathit{TS}$.
If~$S$ is a substring of~$T$, 
we write~$S \prec T$ or~$T \succ S$.
Let $[n]$ denote the set $\{1,2,\ldots,n\}$.
A substring of~$T$ with starting position~$i \in [n]$ 
and end position~$j \in [n]$ is written as~$T[i \ldots j]$. 
A substring  $T[1\ldots j]$ is called a prefix of $T$, 
and $T[i \ldots n]$ is called a suffix of $T$.
Let $T_i$ denote the $i^{\text{th}}$ suffix of $T$, $T[i\ldots n]$.
An \emph{occurrence} in the text~$T$ is a pair of 
starting and ending positions $(s, e) \in [n] \times [n]$.
We say $(s, e)$ is an occurrence \emph{of string~$S$} if $S = T[s \ldots e]$,
and $i$ is an occurrence of $S$ if $S = T[i \ldots i + |S| - 1]$.
The \emph{frequency} of~$S$, denoted by $f(S)$, 
is the number of occurrences of $S$ in $T$.
A string $S$ is \emph{unique} if $f(S) = 1$ 
and is \emph{repeated} if $f(S) \geq 2$. 

\subparagraph{Suffix arrays and Burrows-Wheeler transform.}
The \emph{suffix array} (SA)~\cite{journal/siamcomp/1993/manber} of $T$ 
is an array of size $n$ where $\sa[i]$ stores the text position 
of the $i^\text{th}$ lexicographically smallest suffix.
For a string $S$, let $l$ and $r$ be the smallest and largest positions 
in \sa, respectively, where $S$ is a prefix of the corresponding suffixes 
$T_{\sa[l]}$ and $T_{\sa[r]}$.
Then, the closed interval $\langle l, r \rangle$ is referred to 
as the \emph{SA interval} of~$S$.
The \emph{inverse suffix array} (ISA) of a suffix array $\sa$ 
is an array of length $n$ where $\isa[i] = j$ if and only if $\sa[j] = i$. 
The \emph{Burrows-Wheeler transform} (BWT)~\cite{1994/burrows} of $T$
is a string of length $n$ where $\bwt[i] = T[\sa[i]-1]$ for $\sa[i] > 1$
and $\bwt[i] = \$$ if $\sa[i] = 1$.
The \emph{LF mapping} is an array of length $n$  
where $\lf[i] = \isa[\sa[i] - 1]$ for $\sa[i] > 1$,
and $\lf[i] = 1$ if $\sa[i] = 1$.

\subparagraph{LCP arrays and irreducible LCP values.}
The \emph{longest common prefix array} 
(LCP)~\cite{conf/cpm/2001/kasai, conf/swat/2004/manzini} 
is an array of length $n$
where the $i^{\text{th}}$ entry in the LCP array stores the length of 
the longest common prefix between $T_{\sa[i-1]}$ and $T_{\sa[i]}$, 
which is denoted $\mathit{lcp}\left( T_{\sa[i-1]}, T_{\sa[i]} \right)$. 
An entry $\lcp[i]$ is called \emph{reducible} if $\bwt[i-1] = \bwt[i]$ 
and \emph{irreducible} otherwise.
The sum of irreducible LCP values was first bounded 
as $\bigO(n \log n)$~\cite{conf/cpm/2009/karkkainen}.
Later the bound has been refined with the development of 
\emph{repetitiveness measures}~\cite{journal/csur/2022/navarro/1}.
Let $r$ be the number of equal-letter runs in the BWT of $T$.
The bound on the sum of irreducible LCP values
was improved~\cite{journal/tcs/2016/karkkainen} to $\bigO(n \log r)$.
Let $S(k)$ be the number of distinct strings of length $k$ in $T$,
and define $\delta := \max\left\{ S(k)/k : k \in [n] \right\}$~\cite{
journal/talg/2021/christiansen, 
journal/tit/2023/kociumaka,
journal/algorithmica/2013/raskhodnikova}.
The bound was further improved in the following result.

\begin{lemma}[\cite{conf/focs/2020/kempa}]\label{thm:sum-irr-lcp}
The sum of irreducible LCP values is at most $\bigO(n  \log \delta)$. 
\end{lemma}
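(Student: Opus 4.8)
The plan is to prove the bound by a layer-cake reformulation together with a charging argument, the point being to replace the crude bound $\bigO(n\log r)$ (with $r$ the number of BWT runs) by the finer $\bigO(n\log\delta)$. Write the target as $\sum_{k\ge 1}N(k)$, where $N(k)$ is the number of irreducible LCP entries whose value is at least $k$; summing the unit contributions $k=1,2,\dots$ of each entry recovers the sum of irreducible LCP values exactly. First I would recall the permuted-LCP reformulation through the neighbour function $\Phi(j):=\sa[\isa[j]-1]$: the entry at $j$ is irreducible precisely when $T[j-1]\ne T[\Phi(j)-1]$, equivalently $\Phi(j)\ne\Phi(j-1)+1$, and at every reducible position the permuted value drops by exactly one. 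This makes precise that the irreducible values are the genuinely incompressible part of the LCP information, that there are $\Theta(r)$ of them, and hence that the whole difficulty lies in bounding how large these $\Theta(r)$ values can be in total.

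The combinatorial core is to let $\delta$ enter through distinct substrings. If $\lcp[i]=\ell$ is irreducible, then $T_{\sa[i-1]}$ and $T_{\sa[i]}$ share a length-$\ell$ prefix $w$ while being preceded by distinct characters $\bwt[i-1]\ne\bwt[i]$; thus $w$ is a repeated, \emph{left-branching} string, and for every $k\le\ell$ its length-$k$ prefix is a distinct substring counted by $S(k)$. I would charge each irreducible entry of value at least $k$ to the length-$k$ substring that heads its shared prefix, and then perform a dyadic summation over the scales $k\in[2^t,2^{t+1})$, bounding the contribution at scale $2^t$ by the distinct left-branching substrings of that length and summing over the $\bigO(\log n)$ scales.

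The connection to $\delta$ is then supplied by the defining inequality $S(k)\le\delta k$, which holds \emph{simultaneously for every length} $k$. This simultaneity is exactly what sharpens the logarithmic factor: a single global run count $r$ placed inside the log only yields $\log r$, whereas controlling the left-branching events by the multiscale substring-complexity measure $\delta$ at each length and amortising them against the $n$ text positions is what produces $\log\delta$. Informally, each logarithmic level contributes $\bigO(n)$ after amortisation, and the refinement from $\log r$ to $\log\delta$ is the statement that the relevant left-branching structure is captured by $\delta$ rather than by the raw number of runs.

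The main obstacle, which is where I would lean on the structural machinery of the cited work, is that the naive one-substring-one-charge scheme does \emph{not} give $\delta$: a single substring $w$ can host many irreducible entries inside its SA interval, one per BWT-run boundary, and the same left-extending character may occur in several distinct runs of that interval. Counting run boundaries directly therefore recovers only $r$, and a careless summation of $\sum_k S(k)\approx\delta\sum_k k$ overshoots all the way to $\bigO(\delta n)$. The delicate part is to show that these repeated same-character runs are globally bounded by the substring-complexity measure rather than by the raw run count, and that the charges at different scales share the same $n$ positions instead of accumulating independently; handling this via the attractor/substring-complexity structure underlying $\delta$ is precisely what keeps the total at $\bigO(n\log\delta)$ and constitutes the technical heart of the result.
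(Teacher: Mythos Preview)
The paper does not contain a proof of this lemma at all: it is stated as a cited result from~\cite{conf/focs/2020/kempa} (Kempa and Kociumaka) and used as a black box in the proof of \cref{thm:main-nf-bound}. There is therefore no ``paper's own proof'' to compare your proposal against.

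As for the proposal itself, it is not a proof but an outline that explicitly punts on the hard step. You correctly set up the layer-cake sum $\sum_{k\ge 1}N(k)$, the PLCP/$\Phi$ characterisation of irreducibility, and the link between irreducible positions and left-branching substrings; you also correctly identify that the naive charge (one irreducible entry $\to$ one distinct substring per length) overcounts and only recovers $\bigO(n\log r)$ or worse. But the final paragraph openly defers ``the technical heart of the result'' to ``the structural machinery of the cited work,'' which is precisely the part that distinguishes the $\log\delta$ bound from the earlier $\log r$ bound. In other words, your sketch stops exactly where the actual argument of Kempa--Kociumaka begins: their proof does not proceed by a direct substring-charging scheme of the kind you describe, but goes through a careful analysis relating irreducible LCP values to the substring-complexity function via string synchronising sets and a delicate amortisation that you have not supplied. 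So the proposal is a plausible narrative wrapper around the result, but it contains a genuine and acknowledged gap at the crucial step, and cannot be completed along the lines you suggest without importing essentially the full machinery of the cited paper.
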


\subparagraph{Coloured range listing.}
The \emph{coloured range listing (CRL)} problem is defined as follows.
Preprocess a text $T$ of length $n$ such that,
later, given a range $i, \ldots, j$,
list the position of each distinct character (``colour'') in $T[i \ldots j]$.
The data structure introduced in~\cite{conf/soda/2002/muthukrishnan} 
lists each such position in $\bigO(1)$ time,
occupying $\bigO(n \log n)$ bits of space.
Compressed structures for the CRL problem have also been introduced~\cite{journal/tcs/2013/gagie}.

\section{A Fresh Examination of Net Frequency}\label{sec:net-freq}

In this section, we lay the foundation for efficient net frequency (NF) computation by
re-examining NF and proving several properties.
Before we formally define NF, we first introduce 
the notion of \emph{extensions}.

\begin{definition}[Extensions]\label{def:ext}
Given a string $S$ and two symbols  $x, y \in \Sigma$,
strings $xS$, $Sy$, and $xSy$ are called the  
\emph{left, right, and bidirectional extension} of $S$, respectively.
A left or right extension is also called a \emph{unidirectional extension}. 
We then define the following sets of extensions:
$
L(S) := \left \{ x \in \Sigma : f(xS) \geq 2 \right \},
R(S) := \left \{ y \in \Sigma : f(Sy) \geq 2 \right \}, \text{~and~} 
B(S) := \left \{ (x, y) \in L(S) \times R(S) : f(xSy) \geq 1 \right \}.
$
\end{definition}

\noindent
Note that the definition of $B(S)$ does not require that string $xSy$ needs to repeat; only the unidirectional extensions, $xS$ and $Sy$, must do so.

\begin{definition}[Net frequency~\cite{journal/jise/2001/lin}]\label{def:net-freq}
Given a string $S$ in $T$, the NF of $S$ is zero if it is unique in~$T$;
otherwise $S$ repeats and the NF of $S$ is defined as
\[ \nf(S) := f(S) 
- \sum_{x \in L(S)} f(xS) 
- \sum_{y \in R(S)} f(Sy) 
+  \sum_{(x,y) \in B(S)} f(xSy) \, .  \]
\end{definition}

\noindent
The two subtraction terms discount the occurrences that are part of longer repeated strings while the addition term compensates for double counting (occurrences of $xS$ and $Sy$ could correspond to the same occurrence of~$S$), an inclusion-exclusion approach.
We now introduce a fresh examination of NF that 
significantly simplifies the original definition
and will be the backbone of our algorithms for NF computation later.

\begin{theorem}[Net frequency characteristic]\label{thm:net-char}
Given a repeated string $S$,
\[
\nf(S) =  \left| \left\{ \, 
(x,y)\in \Sigma \times \Sigma : f(xS)=1 \text{~and~}  f(Sy)=1 \text{~and~} f(xSy)=1 
\, \right\} \right| \,.
\]
\end{theorem}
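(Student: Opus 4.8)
The plan is to reinterpret both sides as counts of occurrences of $S$ and match them through a single inclusion--exclusion. Adopt the (Burrows--Wheeler) convention that the character preceding position $1$ is the final sentinel, so that every occurrence of the repeated string $S$ has a well-defined left character and right character; the right side is automatic, since a repeated $S$ cannot contain the unique $\$$ and so never ends at position $n$. For an occurrence $o$ of $S$, write $x(o)$ and $y(o)$ for these two characters and group occurrences by context: for every $x, y \in \Sigma$, the number of occurrences of $S$ with left character $x$ is exactly $f(xS)$, with right character $y$ exactly $f(Sy)$, and with both exactly $f(xSy)$. These conservation identities, in particular $f(S) = \sum_{x \in \Sigma} f(xS) = \sum_{y \in \Sigma} f(Sy)$, are the engine of the argument, and verifying them under the boundary convention is the one technical point to treat carefully.

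First I would rewrite \cref{def:net-freq} as a four-term count over occurrences. Call $o$ \emph{left-repeated} if $f(x(o)S) \ge 2$ and \emph{right-repeated} if $f(S\,y(o)) \ge 2$. By the grouping above, $\sum_{x \in L(S)} f(xS)$ is precisely the number of left-repeated occurrences, $\sum_{y \in R(S)} f(Sy)$ the number of right-repeated ones, and $\sum_{(x,y) \in B(S)} f(xSy) = \sum_{(x,y) \in L(S) \times R(S)} f(xSy)$ the number that are both (the extra constraint $f(xSy) \ge 1$ in $B(S)$ only drops vanishing terms). Hence \cref{def:net-freq} states
\[
\nf(S) = f(S) - \#\{\text{left-repeated}\} - \#\{\text{right-repeated}\} + \#\{\text{both}\},
\]
which by inclusion--exclusion is exactly the number of occurrences of $S$ that are neither left- nor right-repeated, i.e.\ occurrences $o$ with $f(x(o)S) = 1$ and $f(S\,y(o)) = 1$.

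It then remains to identify this occurrence count with the right-hand side, a count of pairs $(x,y) \in \Sigma \times \Sigma$, which I would do via the map $o \mapsto (x(o), y(o))$. If $o$ is a qualifying occurrence with context $(x,y)$, then $f(xS) = 1$ forces $f(xSy) \le 1$ while $o$ itself witnesses $f(xSy) \ge 1$, so $f(xSy) = 1$ and $(x,y)$ lies in the set on the right; conversely, every such pair has a unique occurrence of $xSy$ and hence a unique occurrence of $S$ with that exact context, and $f(xSy) = 1$ guarantees that distinct pairs come from distinct occurrences. This bijection finishes the proof. Equivalently, one can phrase the last step as the observation that on the domain $\{f(xS) = 1,\ f(Sy) = 1\}$ the quantity $f(xSy)$ is a $0/1$ indicator, so summing it equals counting the pairs where it equals $1$.

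The main obstacle is the boundary bookkeeping behind the conservation identities: an occurrence of $S$ beginning at position $1$ has no left neighbour in the linear text, and ignoring it would make $\nf(S)$ overcount the pairs by one precisely when that occurrence's right extension is unique. The clean fix is the cyclic/sentinel convention above, under which position $1$ inherits $\$$ as its left character; since $\$$ occurs once, such an extension is automatically unique and still contributes a legitimate pair $(\$, y) \in \Sigma \times \Sigma$, so every case is accounted for and the two counts agree.
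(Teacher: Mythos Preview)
Your proof is correct and follows essentially the same route as the paper: both partition the contexts of $S$ into four quadrants according to whether the left and right unidirectional extensions repeat, then apply inclusion--exclusion to isolate the ``neither repeats'' quadrant and observe that $f(xSy)\in\{0,1\}$ there. The paper phrases this algebraically (summing $f(xSy)$ over each quadrant of $\Sigma\times\Sigma$) whereas you phrase it at the level of occurrences and close with an explicit bijection, and you handle the boundary sentinel more carefully than the paper does, but the underlying decomposition is identical.
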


\begin{proof}
We first define the following sets:
\begin{align*}
L^+(S) &:= \{  (x,y)\in \Sigma \times \Sigma : f(xS) \geq 2 \text{~and~} f(Sy) = 1 \},\\ 
R^+(S) &:= \{  (x,y)\in \Sigma \times \Sigma : f(xS) = 1 \text{~and~} f(Sy) \geq 2 \},\text{~and~}\\ 
U(S) &:= \{  (x,y)\in \Sigma \times \Sigma : f(xS) = 1 \text{~and~} f(Sy) = 1 \}.
\end{align*}
\noindent
For convenience, we define 
\[
F(W) := \sum_{(x,y) \in W(S)} f(xSy)  \quad \text{where} \quad W \in \{ B, L^+, R^+, U \}.
\]
Then, we have
$f(S) = F(L^+) + F(B) + F(R^+) + F(U)$.
Also note that
\[
\sum_{x \in L(S)} f(xS) = F(L^+) + F(B) \quad \text{and} \quad
\sum_{y \in R(S)} f(Sy) = F(R^+) + F(B). 
\]
Substituting these to \cref{def:net-freq},
we have 
\begin{align*}
\nf(S) = \left(F(L^+) + F(B) + F(R^+) + F(U) \right)
&- \left( F(L^+) + F(B) \right) \\
&- \left( F(R^+) + F(B) \right) + F(B) = F(U).
\end{align*}
Finally, observe that $F(U)$ equals to the expression in \cref{thm:net-char}
since when $f(xS)=1$ and $f(Sy)=1$, $f(xSy)$ can only be 0 or 1.
\end{proof}

In the original definition of NF and in our characteristics,
extensions are limited to adding only one character to one side of the string.
It is intriguing to explore the impact of longer extensions.
Surprisingly, the analogous quantity of NF with longer extensions is equal to NF.

\begin{lemma}\label{thm:order-k-net}
Given a repeated string $S$, for each $k\geq 1$, we have $ \nf(S) = \nf_k(S)$
where 
\[
 \nf_k(S) :=  \left| \left\{  \,
 (X,Y)\in \Sigma^k \times \Sigma^k : 
 f(XS)=1 \text{~and~}  f(SY)=1 \text{~and~} f(XSY)=1 
 \, \right\}\right|\,.
\]
\end{lemma}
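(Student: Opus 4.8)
The plan is to prove the identity by exhibiting a bijection between the set counted in \cref{thm:net-char} --- the pairs $(x,y) \in \Sigma \times \Sigma$ with $f(xS) = f(Sy) = f(xSy) = 1$ --- and the set of pairs $(X,Y) \in \Sigma^k \times \Sigma^k$ counted by $\nf_k(S)$. The convenient viewpoint is occurrence-based: whenever $f(xSy) = 1$ (respectively $f(XSY) = 1$), the unique occurrence of $xSy$ (respectively $XSY$) pins down a single occurrence of $S$, and conversely every occurrence of $S$ carries its own one-character and length-$k$ left and right contexts. I would therefore recast both counts as counting occurrences of $S$ whose left and right extensions, at the relevant length, are unique, and then argue that the two occurrence sets coincide.

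First I would settle the easy inclusion, from length $1$ to length $k$. Given a pair $(x,y)$ with $f(xS) = f(Sy) = f(xSy) = 1$, let $(s,e)$ be the unique occurrence of $xSy$ and read off the length-$k$ contexts $X = T[s-k \ldots s-1]$ and $Y = T[e+1 \ldots e+k]$, which exist since $xSy$ already occurs at $(s,e)$. Because $XS$, $SY$, and $XSY$ merely extend $xS$, $Sy$, and $xSy$, and frequency cannot increase when a string is extended, we get $f(XS) \le f(xS) = 1$ and likewise for the other two; as each frequency is at least $1$, all three equal $1$. Hence $(X,Y)$ is counted by $\nf_k(S)$, and the map is injective because the innermost characters of $X$ and $Y$ recover $x$ and $y$. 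This already yields $\nf(S) \le \nf_k(S)$.

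The reverse inclusion is where I expect essentially all the difficulty to lie, and it is the step I would flag as the main obstacle. The natural candidate map sends a pair $(X,Y)$ counted by $\nf_k(S)$ to its innermost characters $(x,y) = (X[k], Y[1])$, after which I must verify $f(xS) = f(Sy) = f(xSy) = 1$. The trouble is that passing from $XS$ to $xS$ removes characters, and frequency is not monotone in that direction: truncating a unique extension can strictly increase its frequency, so $f(XS) = 1$ gives only $f(xS) \ge 1$. The obstacle is concrete --- $S$ might occur twice with the same immediate left neighbour $x$, so that $f(xS) = 2$, while the two occurrences diverge one position further out and make the length-$k$ extension $XS$ unique. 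To overcome this I would attempt a contradiction argument: if some innermost extension were repeated, I would try to locate a strictly shorter repeated context around the occurrence pinned down by $(X,Y)$ and derive a conflict with the uniqueness encoded in $f(XS) = 1$, while drawing on the hypothesis that $S$ itself is repeated. I would sanity-check any such argument on small inputs, since the inclusion $\nf(S) \le \nf_k(S)$ comes for free but the matching reverse inequality $\nf_k(S) \le \nf(S)$ is delicate. An equivalent packaging is induction on $k$, reducing to $\nf_k(S) = \nf_{k+1}(S)$, where the direction $\nf_k(S) \le \nf_{k+1}(S)$ is again the easy extension argument and the entire burden falls on showing that dropping the outermost character of a unique length-$(k+1)$ extension preserves uniqueness.
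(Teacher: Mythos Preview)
Your instinct about the reverse inclusion is exactly right, and in fact that direction is not merely difficult but \emph{false}: the lemma does not hold as stated. The scenario you describe --- two occurrences of $S$ sharing the same immediate left neighbour $x$ (so $f(xS)\ge 2$) while diverging one position further out (so the longer left extensions become unique) --- is realisable. Take $T = \texttt{0pacdq0rbces0tbcfu0}$ and $S = \texttt{c}$. Here $f(\texttt{bc}) = 2$, so only the occurrence preceded by \texttt{a} is a net occurrence and $\nf(S) = 1$. But the length-$2$ left contexts $\texttt{pa},\texttt{rb},\texttt{tb}$ and right contexts $\texttt{dq},\texttt{es},\texttt{fu}$ are all unique, so all three pairs are counted and $\nf_2(S) = 3$. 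Your easy direction in fact shows $\nf_k(S) \le \nf_{k+1}(S)$ in general, and once $k$ is large enough every interior occurrence of $S$ has unique length-$k$ contexts on both sides; hence $\nf_k(S)$ climbs toward $f(S)$ rather than staying at $\nf(S)$.

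The paper's proof commits the error precisely where you anticipated trouble. It rewrites $\nf_k(S)$ as a double sum over $(X,Y)$ with $f(XS)=f(SY)=1$ and over $(a,b)$ with $f(aXS)=f(SYb)=1$, then re-indexes by $(X',Y')=(aX,Yb)$ retaining only the conditions $f(X'S)=f(SY')=1$. In that change of variables the outer constraints $f(XS)=1$ and $f(SY)=1$ are silently discarded; a pair $(X',Y')$ with $f(X'S)=1$ but $f\bigl(X'[2\ldots k{+}1]\,S\bigr)\ge 2$ is counted after the re-indexing but not before. In the example above this is exactly what happens with $(\texttt{rb},\texttt{es})$ and $(\texttt{tb},\texttt{fu})$. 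So the paper's chain of equalities really establishes only $\nf_k(S)\le \nf_{k+1}(S)$ --- your easy direction --- and neither your proposed contradiction argument nor any other can close the gap, because the statement itself is false.
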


\begin{proof}
First note that $\nf_1(S) = \nf(S)$ by \cref{thm:net-char}.
We then aim to show that for each $k > 1$,
$\nf_k(S) = \nf_{k+1}(S)$.
For convenience, we express $\nf_k$ in the following summation.
Note that we do not need the condition $f(XSY) = 1$
because it is either 0 or 1 for each summand.
\[ 
\nf_k(S) = \sum_{(X,Y) \in \Sigma^k \times \Sigma^k}^{\substack{f(XS) = 1 \\ f(SY) = 1 }} f(XSY). 
\] 
Next, observe that for any string $S$, its frequency can be expressed 
using the frequencies of the bidirectional extensions of $S$:
\[
f(S) = \sum_{(a, b) \in \Sigma \times \Sigma} f(aSb).
\]
We then substitute $f(XSY)$ in $\nf_k(S)$ using the above equality:
\[ 
\nf_k(S) = \sum_{
(X,Y) \in \Sigma^k \times \Sigma^k
}^{
\substack{f(XS) = 1 \\ f(SY) = 1 }
} \sum_{
(a,b) \in \Sigma \times \Sigma
} f(aXSYb). 
\]
For the inner summation, observe that $f(aXS) \leq f(XS) = 1$ 
and $f(SYb) \leq f(SY) = 1$, thus, 
\[ 
\nf_k(S) = \sum_{
(X,Y) \in \Sigma^k \times \Sigma^k
}^{
\substack{f(XS) = 1 \\ f(SY) = 1 }
} \sum_{
(a,b) \in \Sigma \times \Sigma
}^{\substack{f(aXS) = 1 \\ f(SYb) = 1 }} f(aXSYb)
= \sum_{
(aX,Yb) \in \Sigma^{k+1} \times \Sigma^{k+1}
}^{\substack{f(aXS) = 1 \\ f(SYb) = 1 }} f(aXSYb).
 \]
 Let $X' := aX$ and $Y' := Yb$, we have
\[ 
\nf_k(S) = \sum_{
(X',Y') \in \Sigma^{k+1} \times \Sigma^{k+1}
}^{\substack{f(X'S) = 1 \\ f(SY') = 1 }} f(X'SY') = \nf_{k+1}(S). 
\]
And therefore, we have shown that 
$\nf_k(S) = \nf(S)$ for each $k \geq 1$.
\end{proof}

So far the definition and properties of NF have been formulated 
in terms of symbols from the alphabet.
To facilitate our discussion on the properties and algorithms of NF later, 
we switch our focus away from symbols 
and reformulate NF in terms of occurrences. 
Recall that the frequency of a string $S$ is the number of occurrences of $S$.
Analogously, the NF of $S$ is the number of \emph{net occurrences} of $S$.

\begin{definition}[Net occurrence]\label{def:net-occ}
An occurrence $(i, j)$ is a \emph{net occurrence} if
$f( T[i   \ldots j  ] ) \geq 2$, 
$f( T[i-1 \ldots j  ] ) = 1$, and
$f( T[i   \ldots j+1] ) = 1$.
When $i=1$, $f( T[i-1 \ldots j  ] ) = 1$ is assumed to be true;
when $j=n$, $f( T[i   \ldots j+1] ) = 1$ is assumed to be true.
\end{definition}

\noindent
When $f(xS)=1$ and $f(Sy)=1$,
$f(xSy)$ is either 0 or 1.
But when $f(T[i-1 \ldots j])=1$ and $f(T[i \ldots j+1])=1$,
$f(T[i-1\ldots j+1])$ must be 1 and cannot be 0.
Thus, the conditions in \cref{def:net-occ} do not mention the bidirectional extension,~$f(T[i-1\ldots j+1])=1$.

\subsection{Net Frequency of Fibonacci Words: A Case Study}\label{sec:fib-words}

Let $F_i$ be the $i^{\text{th}}$ (finite) Fibonacci word 
over binary alphabet $\{\texttt{a}, \texttt{b}\}$, 
where~$F_1 := \texttt{b}, F_2 := \texttt{a}$, and for each $i \geq 3$, 
$F_i := F_{i-1}F_{i-2}$. 
Note that $|F_i|= f_i$ where $f_i$ is the $i^{\text{th}}$ Fibonacci number.
There has been an extensive line of research on Fibonacci words,
from their combinatorial properties~\cite{conf/spire/2023/kishi}
to lower bounds and worst-case examples for strings algorithms~\cite{journal/mst/2022/inoue}.
The first ten Fibonacci words are presented in \cref{tab:fib-str}.

In this section, we examine the NF of Fibonacci words,
which later will help us obtain a lower bound on 
the sum of lengths of strings with positive NF in a text.
Specifically, we assume $i \geq 7$,
we regard $F_i$ as our input text, and 
we study the net frequency of
$F_{i-2}$ and $S_i := F_{i-1}[1 \ldots f_{i-1}-2]$ in $F_i$.

\begin{table}[t]
\centering
\caption{
The $i^{\text{th}}$ Fibonacci number $f_i$ and the $i^{\text{th}}$ Fibonacci word $F_i$ are listed for $1 \leq i \leq 10$.
Net occurrences in $F_i$ are underlined for $i \geq 7$.
}
\label{tab:fib-str}
\begin{tabular}{rr|l}
$i$ & $f_i$ &  $F_i$  \\
\hline
1 & 1  & \texttt{b} \\
2 & 1  & \texttt{a}  \\
3 & 2  & \texttt{ab} \\
4 & 3  & \texttt{aba}  \\
5 & 5  & \texttt{abaab}  \\
6 & 8  & \texttt{abaababa}  \\
7 & 13  & \texttt{\underline{abaab}\underline{\underline{a}ba\underline{aba}}\underline{ab}}  \\
8 & 21  & \texttt{\underline{abaababa}\underline{\underline{aba}ab\underline{abaaba}}\underline{ba}}  \\
9 & 34  & \texttt{\underline{abaababaabaab}\underline{\underline{abaaba}ba\underline{abaababaaba}}\underline{ab}} \\
10 & 55 & \texttt{\underline{abaababaabaababaababa}\underline{\underline{abaababaaba}ab\underline{abaababaabaababaaba}}\underline{ba}}
\end{tabular}
\end{table}

\paragraph*{Net Frequency of
\texorpdfstring{$F_{i-2}$}{} in
\texorpdfstring{$F_i$}{}
}

We begin by introducing some basic concepts in combinatorics on words~\cite{book/1997/lothaire}.
A nonempty word $u$ is a \emph{repetition} of a word $w$
if there exist words $x, y$ such that $w = xu^ky$
for some integer $k \geq 2$.
When $k=2$, the repetition is called a \emph{square}.
A word $v$ that is both a prefix and a suffix of $w$, with $v \neq w$,
is called a \emph{border} of $w$. 
Stronger results on the borders and squares of $F_i$ have been introduced 
before~\cite{journal/jcmcc/1996/cummings, journal/tcs/1997/iliopoulos},
but for our purposes, the following suffices.

\begin{observation}\label{thm:f-i-2-border-square}
$F_{i-2}$ is a border and a square of $F_i$.
\end{observation}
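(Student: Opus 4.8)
The plan is to prove the two claims separately, each relying only on the defining recurrence $F_i = F_{i-1}F_{i-2}$ together with one extra unfolding, so that no appeal to the ``stronger results'' mentioned just above the statement is needed. For the border claim, I would first observe that $F_{i-2}$ is a suffix of $F_i$ directly from $F_i = F_{i-1}F_{i-2}$. To see it is also a prefix, I unfold once more: since $F_{i-1} = F_{i-2}F_{i-3}$, we obtain $F_i = F_{i-2}F_{i-3}F_{i-2}$, which exhibits $F_{i-2}$ as a prefix as well. Because $|F_{i-2}| = f_{i-2} < f_i = |F_i|$, the words $F_{i-2}$ and $F_i$ are distinct, so $F_{i-2}$ is a \emph{proper} border, as required.

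For the square claim, the goal is to find an occurrence of $F_{i-2}^2$ inside $F_i$, and I would in fact establish the sharper fact that $F_{i-2}^2$ is a prefix of $F_i$. Starting from the decomposition $F_i = F_{i-2}\,(F_{i-3}F_{i-2})$ obtained above, it suffices to show that $F_{i-2}$ is a prefix of the tail $F_{i-3}F_{i-2}$, since then the first $2f_{i-2}$ symbols of $F_i$ read $F_{i-2}F_{i-2}$. Here the key auxiliary fact is that $F_{i-4}$ is a prefix of $F_{i-3}$, which follows immediately from $F_{i-3} = F_{i-4}F_{i-5}$. Using $F_{i-2} = F_{i-3}F_{i-4}$, the tail rewrites as $F_{i-3}F_{i-2} = F_{i-3}F_{i-3}F_{i-4}$, and its prefix of length $f_{i-2} = f_{i-3} + f_{i-4}$ consists of the whole first copy of $F_{i-3}$ followed by the length-$f_{i-4}$ prefix of the second copy, namely $F_{i-4}$; hence this prefix equals $F_{i-3}F_{i-4} = F_{i-2}$. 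This yields $F_i = F_{i-2}^2\,Z$ with $Z = F_i[\,2f_{i-2}+1 \ldots f_i\,]$ of length $f_{i-3}$, so $F_{i-2}$ is a square of $F_i$ in the sense of the definition (taking the outer factor $x$ empty, $k=2$, and $y = Z$).

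I expect the only delicate point to be the length bookkeeping in the square argument and the verification that the indices stay in the valid range: the unfoldings $F_{i-1}=F_{i-2}F_{i-3}$, $F_{i-2}=F_{i-3}F_{i-4}$, and $F_{i-3}=F_{i-4}F_{i-5}$, together with the prefix fact, all require $i$ large enough (concretely $i \ge 6$), which is comfortably subsumed by the standing assumption $i \ge 7$. The border part is essentially immediate, so the square part carries the real content, and I would present the prefix-of-$F_{i-3}F_{i-2}$ computation as the single nontrivial step.
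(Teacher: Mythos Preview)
Your proof is correct and follows essentially the same approach as the paper: both establish the border via $F_i = F_{i-2}\,F_{i-3}\,F_{i-2}$ and the square by unfolding further to exhibit $F_{i-2}^2$ as a prefix of $F_i$, using the same recurrence steps $F_{i-2}=F_{i-3}F_{i-4}$ and $F_{i-3}=F_{i-4}F_{i-5}$. The only cosmetic difference is that the paper writes the chain as an explicit factorisation $F_i = F_{i-2}\,F_{i-2}\,F_{i-5}\,F_{i-4}$ (thereby identifying your tail $Z$ as $F_{i-5}F_{i-4}$), whereas you phrase it as ``$F_{i-2}$ is a prefix of $F_{i-3}F_{i-2}$''; the underlying computation is identical.
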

\begin{proof}
We apply the recurrence
and factorise $F_i$ as follows.
Occurrences of~$F_{i-2}$ as a border or a square of $F_i$ are underlined.
$
F_i = F_{i-1} \ F_{i-2}  
     = \underline{F_{i-2}} \ F_{i-3} \ \underline{F_{i-2}} 
     = F_{i-2} \ F_{i-3} \ F_{i-3} \ F_{i-4} 
     = F_{i-2} \ F_{i-3} \ F_{i-4} \ F_{i-5} \ F_{i-4} 
     = \underline{F_{i-2}} \ \underline{F_{i-2}} \ F_{i-5} \ F_{i-4}.
$
\end{proof}

\begin{figure}
    \centering
    \includegraphics[width=0.45\linewidth]{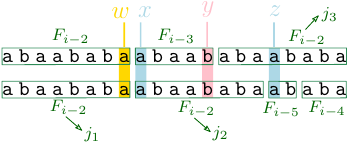}
    \caption{Illustration of proof of \cref{thm:f-i-2-net-occ}.
    Two factorisations of $F_8$ are depicted with rectangles.}
    \label{fig:f-i-2-proof}
\end{figure}

\begin{theorem}\label{thm:f-i-2-net-occ}
$\nf(F_{i-2}) \geq 1$.
\end{theorem}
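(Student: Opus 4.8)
The plan is to exhibit a single \emph{net occurrence} of $F_{i-2}$ in $F_i$; since $\nf(F_{i-2})$ equals the number of net occurrences, this suffices for $\nf(F_{i-2}) \geq 1$. The natural candidate is the occurrence of $F_{i-2}$ as a \emph{suffix} of $F_i$, which exists by \cref{thm:f-i-2-border-square}: writing $F_i = F_{i-1}F_{i-2}$, this occurrence sits at positions $f_{i-1}+1, \ldots, f_i$. I would then check the three conditions of \cref{def:net-occ} for exactly this occurrence.

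Two of the three conditions are immediate. Because $F_{i-2}$ is a square of $F_i$ by \cref{thm:f-i-2-border-square}, the factor $F_{i-2}F_{i-2}$ is a prefix, so $F_{i-2}$ occurs at least twice and $f(F_{i-2}) \geq 2$. The right-extension condition is free: the suffix occurrence ends at position $f_i = n$, so by the boundary convention in \cref{def:net-occ} it holds automatically. This reduces the entire statement to the left-extension condition, namely that the single character $c := F_i[f_{i-1}]$ immediately preceding the suffix occurrence produces a \emph{unique} string $cF_{i-2}$, i.e.\ $f(cF_{i-2}) = 1$.

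For the left extension I would bring in the second factorisation $F_i = F_{i-2}\,F_{i-3}\,F_{i-2}$, obtained by expanding $F_{i-1} = F_{i-2}F_{i-3}$ inside $F_i = F_{i-1}F_{i-2}$ (these are the two factorisations drawn in \cref{fig:f-i-2-proof}). It shows that the suffix copy of $F_{i-2}$ is preceded by $F_{i-3}$, so $c$ is the last letter of $F_{i-3}$. The key combinatorial fact is that final letters of Fibonacci words have period two in the index: since $F_k = F_{k-1}F_{k-2}$, the last letter of $F_k$ equals that of $F_{k-2}$, so the last letter of $F_{i-3}$ differs from that of $F_{i-2}$. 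I would then locate all occurrences of $F_{i-2}$ in $F_i$: the two factorisations place copies at positions $1$, $f_{i-2}+1$, and $f_{i-1}+1$. The prefix occurrence has no left neighbour, the middle (square) occurrence is preceded by the last letter of $F_{i-2}$, and only the suffix occurrence is preceded by $c$; hence $f(cF_{i-2}) = 1$, discharging the last condition.

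The main obstacle is establishing that these three are the \emph{only} occurrences of $F_{i-2}$ in $F_i$ — equivalently, directly ruling out any further occurrence of $F_{i-2}$ whose left neighbour is $c$ — since an uncontrolled fourth occurrence with left letter $c$ would destroy the uniqueness of $cF_{i-2}$. I expect to settle this by induction on $i$ using the recursive factorisation together with the period-$f_{i-1}$ structure of $F_i$. The remaining ingredients, namely the two factorisations and the parity of the final letters, are routine.
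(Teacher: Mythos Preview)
Your approach is essentially identical to the paper's: both single out the suffix occurrence of $F_{i-2}$, use the two factorisations $F_i = F_{i-2}\,F_{i-3}\,F_{i-2}$ and $F_i = F_{i-2}\,F_{i-2}\,F_{i-5}\,F_{i-4}$ to locate copies at positions $1$, $f_{i-2}+1$, and $f_{i-1}+1$, and invoke the alternating-last-letter fact to distinguish the left-extension character at the suffix copy from that at the middle copy. The paper actually stops at that comparison and does not argue that these three are all the occurrences of $F_{i-2}$ in $F_i$---the step you flag as your ``main obstacle''---so on that point you are being more scrupulous than the paper itself.
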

\begin{proof}
The proof is illustrated in \cref{fig:f-i-2-proof}.
In the following two factorisations of $F_i$,
$F_i = F_{i-2} \ F_{i-3} \ F_{i-2}$ and 
$F_i = F_{i-2} \ F_{i-2} \ F_{i-5} \ F_{i-4}$,
consider $j_1, j_2$, and $j_3$, three occurrences of $F_{i-2}$.
Let $w$ and $y$ be the left extension characters of $j_2$ and $j_3$, respectively,
and 
let $x$ and $z$ be the right extension characters of $j_1$ and $j_2$, respectively.
Using the factorisation~$F_i = F_{i-2} \ F_{i-3} \ F_{i-2}$,
observe that  $w = F_{i-2}[f_{i-2}]$, 
$x = F_{i-2}[1]$, and $y = F_{i-3}[f_{i-3}]$.
Using the factorisation~$F_i = F_{i-2} \ F_{i-2} \ F_{i-5} \ F_{i-4}$,
we have $z = F_{i-5}[1]$.
Thus, $x = z = \texttt{a}$, and $w \neq y$
because the last character of consecutive Fibonacci words alternates.
Therefore, $j_1$ and $j_2$ are not net occurrences 
of $F_{i-2}$ in $F_i$ and only $j_3$ is.
\end{proof}

\paragraph*{Net Frequency of 
\texorpdfstring{$S_i$}{} in 
\texorpdfstring{$F_i$}{}
}

In the recurrence of Fibonacci word,
$F_{i-2}$ is appended to $F_{i-1}$,
$F_i = F_{i-1} \ F_{i-2}$.
When we reverse the order of the concatenation
and prepend $F_{i-2}$  to $F_{i-1}$,
for example, notice that $F_6 \ F_5 = \texttt{abaababa|aba\underline{ab}}$
and $F_5 \ F_6 = \texttt{abaab|abaaba\underline{ba}}$
only differ in the last two characters. 
Such property is referred to as \emph{near-commutative} in~\cite{journal/dm/1997/pirillo}.
In our case, we characterise the string that is 
\emph{invariant} under such reversion 
with $Q_i$ in the following definition.

\begin{definition}[$Q_i$ and $\Delta(j)$]
Let $Q_i := F_{i-5} \ F_{i-6} \cdots F_{3} \ F_{2}$
be the concatenation of $i-6$ consecutive Fibonacci words in decreasing length.
For $j \in \{0, 1\}$, we define
$\Delta(j) := \texttt{ba}$ if $j = 0$, and
$\Delta(j) := \texttt{ab}$ otherwise.
\end{definition}

\noindent
In \cref{fig:q-i}, 
$F_{i-4} \ F_{i-5}$ and $F_{i-5} \ F_{i-4}$
only differ in the last two characters,
and their common prefix is $Q_i$.
The alternation between \texttt{ab} and \texttt{ba} was also observed
in~\cite{journal/ipl/1981/luca},
but their focus was on capturing the length-2 suffix
appended to the palindrome $F_i[1 \ldots f_i - 2]$.

Observe that $F_{i-3} = F_{i-4} \ F_{i-5}$,
the invariant discussed earlier is captured as follows.

\begin{lemma}\label{thm:q-i}
$F_{i-3} = Q_i \ \Delta\left(1 - (i \bmod 2)\right)
\quad \text{and} \quad
F_{i-5} \ F_{i-4} = Q_i \ \Delta(i \bmod 2).$
\end{lemma}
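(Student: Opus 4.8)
The plan is to prove both equalities by induction on $i$, since the definition of $Q_i$ and the recurrence $F_i = F_{i-1}F_{i-2}$ are both recursive. The two claimed identities are tightly coupled: the first describes $F_{i-3}$, and the second describes the near-commuted product $F_{i-5}F_{i-4}$, which by the recurrence is a rearrangement of the factors making up $F_{i-3} = F_{i-4}F_{i-5}$. I would therefore prove them \emph{simultaneously} as a single inductive statement, so that the inductive hypothesis for $F_{i-3}$ at one index feeds directly into the step for the next index.

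First I would unfold the definition $Q_i = F_{i-5}F_{i-6}\cdots F_3 F_2$ one step from the top, writing $Q_i = F_{i-5}\,Q_{i-1}$, and check that the index ranges match (both $Q_i$ and $Q_{i-1}$ end at the factor $F_2$). Next I would establish the base case by taking the smallest admissible $i$ (recall the case study assumes $i \geq 7$, so $Q_i$ is nonempty), explicitly expanding $F_{i-3}$, $F_{i-5}F_{i-4}$, and $Q_i$ from \cref{tab:fib-str} and verifying that the length-$2$ suffixes $\Delta(1-(i\bmod 2))$ and $\Delta(i\bmod 2)$ come out correctly; this pins down the parity convention in the definition of $\Delta$. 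For the inductive step, I would use $F_{i-3} = F_{i-4}F_{i-5}$ and rewrite $F_{i-4} = Q_{i-1}\,\Delta(\cdot)$ via the hypothesis, so that $F_{i-3} = Q_{i-1}\,\Delta(\cdot)\,F_{i-5}$, and then argue that the common prefix reorganises into $F_{i-5}\,Q_{i-1} = Q_i$ followed by a length-$2$ tail.

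The key mechanism driving the whole argument is the parity alternation: because consecutive Fibonacci words end in alternating letters (as already used in the proof of \cref{thm:f-i-2-net-occ}), the suffix flips between \texttt{ab} and \texttt{ba} as $i$ increases by one, which is exactly what the $i \bmod 2$ dependence in $\Delta$ records. I would track this flip carefully across the step, confirming that $\Delta(1-((i{-}1)\bmod 2))$ at index $i-1$ becomes $\Delta(1-(i\bmod 2))$ at index $i$, and likewise for the second identity. The cleanest bookkeeping will come from observing that the near-commutative property noted before the definition of $Q_i$ says $F_{i-5}F_{i-4}$ and $F_{i-4}F_{i-5}$ share the prefix $Q_i$ and differ only in their final two characters, so the two identities are genuinely two faces of one fact.

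The main obstacle I anticipate is not the recurrence itself but getting the parity indexing exactly right: the $\Delta$ argument is $1-(i\bmod 2)$ in one identity and $i\bmod 2$ in the other, and the shift by one factor from $Q_i$ to $Q_{i-1}$ changes the parity, so an off-by-one in the modular arithmetic would swap \texttt{ab} and \texttt{ba}. I would guard against this by verifying the suffixes against at least two consecutive rows of \cref{tab:fib-str} before trusting the inductive step, letting the concrete strings fix the convention rather than relying on the algebra alone.
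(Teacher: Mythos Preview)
Your overall architecture---a simultaneous induction in which the two identities feed each other across consecutive indices---is sound and in fact yields a tidier argument than the paper's.  The paper proves each equality by a \emph{separate} strong induction with step size two: for $P(k{+}1)$ it expands $F_{k-2}=F_{k-4}F_{k-5}F_{k-4}$ and applies the hypothesis $P(k{-}1)$ only to the \emph{rightmost} $F_{k-4}$, so that the $\Delta$ tail stays at the right end and the remaining prefix $F_{k-4}F_{k-5}Q_{k-1}$ is literally $Q_{k+1}$.  No cross-use of the other identity is needed.

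The soft spot in your sketch is the specific substitution you chose in the step for the first identity.  Writing $F_{i-3}=F_{i-4}F_{i-5}$ and replacing $F_{i-4}$ by $Q_{i-1}\Delta(\cdot)$ via the \emph{first} identity at $i{-}1$ drops the two-letter tail into the middle of the word, giving $Q_{i-1}\,\Delta(\cdot)\,F_{i-5}$.  Your proposed ``reorganise the common prefix into $F_{i-5}Q_{i-1}$'' step is then exactly the near-commutativity assertion the lemma is meant to establish, so as written it is circular.  The fix, staying inside your simultaneous scheme, is to substitute on the \emph{right}: expand once more to $F_{i-3}=F_{i-5}F_{i-6}F_{i-5}$ and apply the \emph{second} identity at $i{-}1$ to the suffix $F_{i-6}F_{i-5}=Q_{i-1}\Delta((i{-}1)\bmod 2)$, obtaining $F_{i-5}Q_{i-1}\Delta(\cdot)=Q_i\,\Delta(1-(i\bmod 2))$ directly.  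Symmetrically, the second identity at $i$ is a one-liner from the first at $i{-}1$: $F_{i-5}F_{i-4}=F_{i-5}\cdot Q_{i-1}\Delta(\cdot)=Q_i\Delta(i\bmod 2)$.  So your cross-feeding intuition is right; only the direction of substitution needs correcting so that $\Delta$ lands at the end rather than in the middle.
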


\begin{proof}[Proof of \cref{thm:q-i} (first equality)]
Let $P(i)$ be the statement $F_{i-3} = Q_i \ \Delta(1 - (i \bmod 2))$.
We prove $P(i)$ by strong induction.
Base case:
observe that $F_{7-3}=\texttt{aba}$,
$Q_7 = F_2 = \texttt{a}$, and $\Delta(1- (7 \bmod 2)) = \Delta(0) = \texttt{ba}$.
Inductive step:
consider $k > 7$, assume that $P(j)$ holds for every $j \leq k$.
We now prove $P(k+1)$ holds.
First, $F_{k-2}  = F_{k-3} \ F_{k-4} = F_{k-4} \ F_{k-5} \ F_{k-4}$.
Then, based on our inductive hypothesis,
$F_{k-4} = Q_{k-1} \ \Delta(1 - (k-1) \bmod 2)
 = F_{k-6} \ F_{k-7} \cdots F_3 \ F_2 \ \Delta(1 - (k-1) \bmod 2).$
Substituting the second $F_{k-4}$ in $F_{k-2}$, we have
\[ F_{k-2} = F_{k-4} \ F_{k-5} \ 
F_{k-6} \ F_{k-7} \cdots F_3 \ F_2 \ \Delta(1 - (k-1) \bmod 2) 
= Q_{k+1} \ \Delta(1 - (k+1) \bmod 2).\]
By induction, $P(i)$ holds for all $i$.
\end{proof}

\begin{proof}[Proof of \cref{thm:q-i} (second equality)]
Let $P(i)$ be the statement $F_{i-5} \ F_{i-4} = Q_i \ \Delta(i \bmod 2)$.
We prove $P(i)$ by strong induction.
Base case:
observe that $F_{7-5}=\texttt{a}$, $F_{7-4}=\texttt{ab}$,
$Q_7 = F_2 = \texttt{a}$, and $\Delta(7 \bmod 2) = \Delta(1) = \texttt{ab}$.
Inductive step:
consider $k > 7$, assume that $P(j)$ holds for every $j \leq k$.
We now prove $P(k+1)$ holds.
First, 
$
F_{k-4} \ F_{k-3} 
= F_{k-4} \ F_{k-4} \ F_{k-5} 
= F_{k-4} \ F_{k-5} \ F_{k-6} \ F_{k-5}.
$
Then, based on our inductive hypothesis,
$
F_{k-6} \ F_{k-5} = Q_{k-1} \ \Delta( (k-1) \bmod 2)
= F_{k-6} \ F_{k-7} \cdots F_3 \ F_2 \ \Delta( (k-1) \bmod 2),
$
which means
$
 F_{k-5} = F_{k-7} \cdots F_3 \ F_2 \ \Delta( (k-1) \bmod 2).
$
Substituting the second $F_{k-5}$ in $F_{k-4} \ F_{k-3}$, we have
\[
F_{k-4} \ F_{k-3}
= F_{k-4} \ F_{k-5} \ F_{k-6} \ F_{k-7} \cdots F_3 \ F_2 \ \Delta( (k-1) \bmod 2)
= Q_{k+1} \ \Delta( (k+1) \bmod 2).
\]
By induction, $P(i)$ holds for all $i$.
\end{proof}

\noindent
Previously we defined $S_i$ as the length $(f_{i-1}-2)$ prefix of $F_{i-1}$, 
now we can see that this is to remove $\Delta$ ($|\Delta| = 2$).
With \cref{thm:q-i}, we now present the main result on the NF of $S_i$.

\begin{theorem}\label{thm:s-i-net-occ}
$\nf(S_i) \geq 2$.
\end{theorem}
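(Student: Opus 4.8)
The plan is to rewrite $S_i$ in terms of $F_{i-2}$ and $Q_i$, exhibit exactly two occurrences of $S_i$ in $F_i$, and verify via \cref{def:net-occ} that both are net occurrences. First I would combine the recurrence $F_{i-1} = F_{i-2} F_{i-3}$ with the first equality of \cref{thm:q-i}, namely $F_{i-3} = Q_i\,\Delta(1 - (i \bmod 2))$. Since $|\Delta| = 2$, deleting the last two characters of $F_{i-1}$ deletes exactly the trailing $\Delta$ coming from $F_{i-3}$, so that $S_i = F_{i-2}\,Q_i$, a string of length $f_{i-1}-2$.

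Next I would locate the occurrences using two factorisations of $F_i$. From $F_i = F_{i-1} F_{i-2}$ and the rewriting above, $F_i = S_i\,\Delta(1 - (i \bmod 2))\,F_{i-2}$, which places an occurrence $A$ of $S_i$ at position $1$, immediately followed by $\Delta(1 - (i \bmod 2))$. From the factorisation $F_i = F_{i-2} F_{i-2} F_{i-5} F_{i-4}$ established in the proof of \cref{thm:f-i-2-border-square}, together with the second equality $F_{i-5} F_{i-4} = Q_i\,\Delta(i \bmod 2)$ of \cref{thm:q-i}, I obtain $F_i = F_{i-2}\,S_i\,\Delta(i \bmod 2)$, which places a second occurrence $B$ at position $f_{i-2}+1$, preceded by the last character of $F_{i-2}$ and followed by $\Delta(i \bmod 2)$. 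Because $\{\Delta(0),\Delta(1)\} = \{\texttt{ba},\texttt{ab}\}$, the characters immediately to the right of $A$ and of $B$ are distinct: one is \texttt{a} and the other \texttt{b}.

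I then claim that these are the only two occurrences, i.e.\ $f(S_i) = 2$. Since $F_{i-2}$ is a prefix of $S_i$, every occurrence of $S_i$ begins an occurrence of $F_{i-2}$; and $F_{i-2}$ occurs in $F_i$ exactly three times, at $j_1 = 1$, $j_2 = f_{i-2}+1$, and $j_3 = f_{i-1}+1$ (the prefix, the overlapping middle copy coming from the square, and the suffix of $F_i = F_{i-2} F_{i-3} F_{i-2}$). Occurrences $A$ and $B$ sit at $j_1$ and $j_2$, while $S_i$ cannot occur at $j_3$ because $|S_i| = f_{i-1}-2 > f_{i-2}$ for $i \geq 7$ (as $f_{i-3} > 2$), so a copy starting at $j_3$ would run past the end of $F_i$. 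Hence $f(S_i) = 2$. Establishing that $F_{i-2}$ admits no fourth occurrence is the step I expect to be the main obstacle; I would settle it from the periodic structure of $F_{i-2}$, whose shortest period is $f_{i-3}$, so that overlapping copies can only sit at the Fibonacci shifts producing $j_2$ and $j_3$, or alternatively by a short induction on the occurrence count. Both routes are routine but need care.

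Finally I would read off the net occurrences from \cref{def:net-occ}. As $f(S_i) = 2 \geq 2$, $S_i$ is repeated. Let $\alpha_A, \alpha_B$ denote the characters immediately to the right of $A$ and $B$, and $\beta_B$ the character immediately to the left of $B$. Occurrence $A$ sits at position $1$, so its left condition holds by convention, and its right extension $S_i\,\alpha_A$ is unique because the only other occurrence, $B$, is followed by the opposite character $\alpha_B \neq \alpha_A$; thus $A$ is a net occurrence. For $B$, the right extension $S_i\,\alpha_B$ is unique by the same disjointness, and the left extension $\beta_B\,S_i$ is unique because $A$, sitting at position $1$, has no preceding character, so $B$ is the only occurrence preceded by $\beta_B$; thus $B$ is also a net occurrence. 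Two net occurrences give $\nf(S_i) \geq 2$, as required, and in fact the argument shows $\nf(S_i) = 2$.
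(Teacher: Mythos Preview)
Your proof follows the same route as the paper: rewrite $S_i = F_{i-2}\,Q_i$ via \cref{thm:q-i}, locate two occurrences of $S_i$ in $F_i$ (one as a prefix, one at position $f_{i-2}+1$ obtained from $F_i = F_{i-2}F_{i-2}F_{i-5}F_{i-4}$ together with the second equality of \cref{thm:q-i}), and observe that their right-extension characters are $\Delta(1-(i\bmod 2))[1] \neq \Delta(i\bmod 2)[1]$, hence both are net occurrences. You go beyond the paper in rigor: the paper's proof simply writes ``consider the two occurrences of $S_i$'' and argues only from the distinct right-extension characters, whereas you explicitly justify $f(S_i)=2$ (via the occurrence count of the prefix $F_{i-2}$) and verify the left-extension conditions as well; the obstacle you flag, that $F_{i-2}$ has exactly three occurrences in $F_i$, is a detail the paper itself does not supply.
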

\begin{proof}
It follows from \cref{thm:q-i} that
$F_{i-1} = F_{i-2} \ F_{i-3} = F_{i-2} \ Q_i \ \Delta(1 - (i \bmod 2))$.
Then, $S_i = F_{i-1}[1 \ldots f_{i-1}-2] = F_{i-2} \ Q_i$.
Consider the two occurrences of $S_i$, 
observe that the right extension characters 
of these occurrences are different:
$\Delta(1 - (i \bmod 2))[1] \neq \Delta(i \bmod 2)[1]$.
(In \cref{fig:q-i}, $\Delta(1)[1] \neq \Delta(0)[1]$.)
Therefore, both occurrences are net occurrences.
\end{proof}

\begin{figure}
    \centering
    \includegraphics[width=0.45\linewidth]{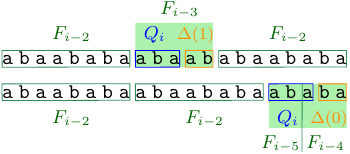}
    \caption{Illustration of \cref{thm:q-i} with $F_8$. 
    Note that
     $F_{i-5} = \texttt{ab}$ and 
     $ F_{i-4} = \texttt{aba}$.}
    \label{fig:q-i}
\end{figure}

\begin{remark}
\cref{thm:f-i-2-net-occ} and \cref{thm:s-i-net-occ}
show that there are at least three net occurrences in $F_i$ 
(one of $F_{i-2}$ and two of $S_i$).
Empirically, we have verified that these are the only three net occurrences 
in $F_i$ for each $i$ until a reasonably large $i$.
Future work can be done to prove this tightness.
\end{remark}

\section{New Algorithms for Net Frequency Computation}\label{sec:nf-algo}

Our reconceptualisation of NF provides a basis for computation of NF in practice.
In this section, we introduce our efficient approach for NF computation.

\subsection{SINGLE-NF Algorithm}
To compute the NF of a query string $S$, 
it is sufficient to enumerate the SA interval of $S$ and
count the number of net occurrences of $S$.
To determine which occurrence is a net occurrence,
we need to check if the relevant extensions are unique.
Locating the occurrences of the left extensions 
is achieved via LF mapping and
checking for uniqueness is assisted by the LCP array.
For convenience, we define the following.
We then observe how to determine the uniqueness of a string 
as a direct consequence of a property of the LCP array. 

\begin{definition}\label{def:lcp-ell}
For each $1 \leq i \leq n-1$,
$\ell(i) := \max( \lcp[i], \, \lcp[i+1] )$.
\end{definition}

\begin{observation}[Uniqueness characteristic]\label{thm:lcp-uniq}
Let $\langle l, r \rangle$ be the SA interval of $S$,
and let $l \leq i \leq r$, then
 $S$ is unique if and only if $|S| > \ell(i)$, and 
$S$ repeats if and only if $|S| \leq \ell(i)$. 
\end{observation}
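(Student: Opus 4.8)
The plan is to characterise uniqueness of $S$ through the longest common prefix that any occurrence of $S$ shares with a \emph{different} suffix. Recall that the SA interval $\langle l, r\rangle$ of $S$ collects exactly the suffixes having $S$ as a prefix, so $f(S) = r - l + 1$. The string $S$ is unique precisely when this interval is a single position, i.e.\ $l = r$; it repeats precisely when $l < r$. The key idea I would exploit is that $\ell(i) = \max(\lcp[i], \lcp[i+1])$ measures how far the suffix $T_{\sa[i]}$ agrees with its lexicographic \emph{neighbours}, and a suffix shares a prefix of length at least $|S|$ with some neighbour if and only if $S$ occurs more than once.

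First I would argue the forward direction in the repeated case. Suppose $S$ repeats, so $l < r$ and $l \le i \le r$. Then $i$ has at least one neighbour (either $i-1$ or $i+1$) that also lies in $\langle l, r\rangle$: indeed, if $i > l$ then $i-1 \in \langle l, r\rangle$, and otherwise $i = l < r$ so $i+1 \in \langle l, r\rangle$. Both $T_{\sa[i]}$ and that neighbouring suffix have $S$ as a prefix, so their longest common prefix has length at least $|S|$; hence the corresponding $\lcp$ entry is at least $|S|$, giving $\ell(i) \ge |S|$. This establishes $|S| \le \ell(i)$ whenever $S$ repeats.

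Next I would handle the unique case. Suppose $S$ is unique, so $l = r = i$ and $S$ is a prefix of $T_{\sa[i]}$ but of no other suffix. I claim $\ell(i) < |S|$. The point is that $\ell(i) = \max(\lcp[i], \lcp[i+1])$ equals the length of the longest prefix that $T_{\sa[i]}$ shares with either of its lexicographic neighbours $T_{\sa[i-1]}$ and $T_{\sa[i+1]}$ (with the convention that boundary entries, being undefined or zero, impose no constraint). If $\ell(i) \ge |S|$, then one neighbour would share a prefix of length at least $|S|$ with $T_{\sa[i]}$, meaning that neighbour also has $S$ as a prefix and lies in $\langle l, r\rangle$, contradicting $l = r = i$. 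Therefore $\ell(i) < |S|$, i.e.\ $|S| > \ell(i)$. Since ``$S$ unique'' and ``$S$ repeats'' are complementary, and the two conditions $|S| > \ell(i)$ and $|S| \le \ell(i)$ are also complementary, the two biconditionals follow at once.

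The main obstacle I anticipate is purely the boundary bookkeeping: $\lcp[1]$ is conventionally undefined (or set to $0$ or $-1$), and when $i = 1$ or $i = n$ one of the two arguments of $\ell(i)$ refers to a nonexistent neighbour. I would treat these uniformly by taking the missing entry to be $0$, which is safe because a nonexistent neighbour imposes no lower bound on $\ell(i)$ and cannot be a spurious second occurrence of $S$; I should check that this convention never lets $\ell(i)$ overstate the true longest shared prefix. The second subtle point, which I would state carefully, is the standard fact that the maximum of $\lcp[i]$ and $\lcp[i+1]$ captures the shared prefix with the \emph{nearest} neighbours only, yet this suffices: by the sorted order of the suffix array, any suffix sharing a prefix of length $\geq |S|$ with $T_{\sa[i]}$ forms a contiguous block around $i$, so a neighbour must already witness the full interval boundary.
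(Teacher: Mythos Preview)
Your proposal is correct and follows essentially the same approach as the paper: both argue that $S$ repeats precisely when one of the lexicographic neighbours $T_{\sa[i-1]}$ or $T_{\sa[i+1]}$ also has $S$ as a prefix, which translates to $|S| \le \lcp[i]$ or $|S| \le \lcp[i+1]$, i.e.\ $|S| \le \ell(i)$. Your treatment is slightly more explicit about boundary cases and the contiguity of the SA interval, but the underlying argument is the same.
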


\begin{proof}
String $S$ is unique if and only if
$\sa[i]$ is the only occurrence of $S$,
meaning neither $\sa[i-1]$ nor $\sa[i+1]$ is an occurrence of $S$.
that is, $S$ is not a prefix of neither $T_{\sa[i-1]}$ nor $T_{\sa[i+1]}$.
In terms of the LCP values, this is equivalent to 
$|S| > \lcp[i]$ and $|S| > \lcp[i+1]$,
which is $|S| > \ell(i)$.
String $S$ repeats if and only if 
$\sa[i]$ is not the only occurrence of $S$,
meaning either $\sa[i-1]$ or $\sa[i+1]$ is an occurrence of $S$, 
that is, $S$ is a prefix of either $T_{\sa[i-1]}$ or $T_{\sa[i+1]}$.
In terms of the LCP values, this is equivalent to 
$|S| \leq \lcp[i]$ or $|S| \leq \lcp[i+1]$,
which is $|S| \leq \ell(i)$.
\end{proof}

Now, we present the main result that underpins our \singlenf algorithm.

\begin{theorem}[Net occurrence characteristic]\label{thm:lcp-net-occ}
Given an occurrence $(s, e)$ in $T$,
let $S := T[s \ldots e]$, $i := \isa[s]$, and $j := \lf[i]$.
Then, $(s, e)$ is a net occurrence if and only if
$|S| = \ell(i) \text{ and } |S| \geq \ell(j)$.
\end{theorem}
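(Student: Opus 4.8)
The plan is to translate the character-based net-occurrence conditions of \cref{def:net-occ} into statements about LCP values using the uniqueness characteristic of \cref{thm:lcp-uniq}, and then to account for the left-extension via the LF mapping. Recall that $(s,e)$ is a net occurrence precisely when $S = T[s\ldots e]$ repeats, the right extension $T[s\ldots e+1]$ is unique, and the left extension $T[s-1\ldots e]$ is unique. First I would handle the right side. The occurrence $(s,e)$ sits at SA position $i = \isa[s]$, so $S$ is a prefix of $T_{\sa[i]} = T_s$. By \cref{thm:lcp-uniq}, $S$ repeats iff $|S| \le \ell(i)$, while the right extension $T[s\ldots e+1]$, which has length $|S|+1$, is unique iff $|S|+1 > \ell(i)$, i.e. $|S| \ge \ell(i)$. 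Combining these two gives exactly $|S| = \ell(i)$: the single equality simultaneously encodes that $S$ is repeated but its one-character right extension is not.

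Next I would handle the left-extension condition $f(T[s-1\ldots e]) = 1$. The key idea is that the LF mapping moves from the suffix $T_s$ (at SA position $i$) to the suffix $T_{s-1}$ (at SA position $j = \lf[i]$), which is the suffix obtained by prepending $T[s-1]$. Thus the left extension $T[s-1\ldots e]$ is a prefix of $T_{\sa[j]} = T_{s-1}$, and it has length $|S|+1$. Applying \cref{thm:lcp-uniq} at SA position $j$, this left extension is unique iff $|S|+1 > \ell(j)$, i.e. $|S| \ge \ell(j)$. So the three character conditions of \cref{def:net-occ} collapse to the two stated inequalities $|S| = \ell(i)$ and $|S| \ge \ell(j)$, and conversely these two imply all three. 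Since (per the remark following \cref{def:net-occ}) the bidirectional condition $f(T[s-1\ldots e+1])=1$ is automatic once both unidirectional extensions are unique, no separate treatment of $xSy$ is needed.

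The main obstacle I anticipate is a clean treatment of the boundary cases $s = 1$ and $e = n$, together with verifying that the length of the relevant LCP-comparison is consistent on both sides. For $e = n$ the right extension does not exist and \cref{def:net-occ} stipulates its uniqueness is assumed true; I would check that $|S| = \ell(i)$ still correctly captures "repeated," using that $T$ ends in the sentinel \$. For $s = 1$, the left extension is assumed unique, and I must confirm that the LF step and the inequality $|S| \ge \ell(j)$ degenerate gracefully (here $\lf$ points to the \$-row, so the condition is vacuously satisfiable). A subtle point to verify carefully is that prepending the character $T[s-1]$ via LF genuinely corresponds to the suffix $T_{s-1}$ and that the common-prefix length governing uniqueness of $T[s-1\ldots e]$ at position $j$ is measured against exactly the right neighbouring suffixes in SA order; this is where the argument must be stated precisely rather than waved through. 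Once the two directions (net occurrence $\Rightarrow$ the inequalities, and the inequalities $\Rightarrow$ net occurrence) are each assembled from these equivalences, the proof closes.
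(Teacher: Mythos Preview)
Your proposal is correct and follows essentially the same route as the paper: apply \cref{thm:lcp-uniq} to translate each of the three conditions in \cref{def:net-occ} into the inequalities $|S|\le\ell(i)$, $|S|+1>\ell(i)$, and $|S|+1>\ell(j)$ (using the LF step to reach the suffix $T_{s-1}$ for the left extension), then combine. Your treatment is in fact more careful than the paper's, which omits the boundary cases $s=1$ and $e=n$ entirely.
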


\begin{proof}
Let $x := T[s-1]$ and $y := T[e+1]$. 
We use \cref{thm:lcp-uniq} to translate the conditions in \cref{def:net-occ}
from checking for uniqueness to comparison of string length with relevant LCP values.
Specifically,
$f(S) \geq 2 $  if and only if  $ |S| \leq  \ell(i)$,
$f(Sy) = 1$ if and only if  $ |S|+1 >  \ell(i)$, and
$f(xS) = 1 $  if and only if  $ |S|+1 >  \ell(j)$.
Combining the inequalities gives the desired result.
\end{proof}

Let $\langle l, r \rangle$ be the SA interval of $S$
and let $f$ be the frequency of $S$.
With \cref{thm:lcp-net-occ}, 
we have an $\bigO(m + f)$ time \singlenf algorithm by 
exhaustively enumerating $\langle l, r \rangle$.
Note that it takes $\bigO(m)$ time to locate $\langle l, r \rangle$~\cite{journal/jda/2004/abouelhoda}
and $\bigO(f)$ to enumerate the interval.
However, with the data structure for CRL, we can improve this time usage.
Specifically, observe that if we preprocess the BWT of $T$ for CRL, 
then, instead of enumerating each position within $\langle l, r \rangle$,
we only need to examine each position that corresponds to 
a distinct character of $\bwt[l \ldots r]$.
Observe that each such character is precisely a  
 distinct left extension character.
We write $\mathit{CRL}_{\bwt}(l, r)$ for such set of positions.
Our algorithm for \singlenf is presented in \cref{algo:single-net-sa},
which takes $\bigO(m + \sigma)$ time
where $\sigma$ is a loose upper bound 
on the number of distinct characters in~$\bwt[l \ldots r]$.

\begin{figure}[t]
\begin{small}
\begin{minipage}[t][][t]{0.45\textwidth}
\begin{algorithm}[H]
\caption{for \singlenf}
\label{algo:single-net-sa}
\Input{$S \gets$ a string;}
$\nf \gets 0$;
\tcp*[h]{the NF of $S$}\;
$\langle l, r \rangle \gets$ the SA interval of $S$;\;
\For{$i \gets  \mathit{CRL}_{\bwt}(l, r)$\label{algo-line:crl}}{
    $j \gets \lf[i]$;\;
    \If(\tcp*[h]{see \cref{thm:lcp-net-occ}})
    {$|S| = \ell(i) \text{~and~} |S| \geq \ell(j)$}{
        $\nf \gets \nf + 1$;\;
    }
}
\Return $\nf$;
\end{algorithm}
\end{minipage}
\hfill
\begin{minipage}[t][][t]{0.475\textwidth}
\begin{algorithm}[H]
\caption{for \allnfextract}
\label{algo:all-net-sa}
$\mathcal{N} \gets \emptyset$;  \;
\tcp*[h]{
$\mathcal{N}$ is a multiset of strings with positive NF.
We write $\mathcal{N}|_S$ for the NF of $S$ in $\mathcal{N}$.
}\;
\For{$i \gets 1, \ldots, n$}{
    $j \gets \lf[i]$;\;
    \If{$\ell(i) \geq \ell(j)$}{\label{algo-line:nf-check-in-extract}
        $S \gets T[C_i]$; \tcp*[h]{see \cref{def:net-occ-candidate}} \;
        $\mathcal{N}|_S \gets \mathcal{N}|_S + 1$;\;
    }
}
\Return $\mathcal{N}$;
\end{algorithm}
\end{minipage}
\end{small}
\end{figure}

\subsection{ALL-NF Algorithms}

From \cref{thm:lcp-net-occ}, 
observe that for each position in the suffix array, only one string occurrence could be a net occurrence, 
namely, the occurrence that corresponds to a repeated string with a unique right extension.
This occurrence will be a net occurrence if the repeated string also has a unique left extension.
For convenience, we define the following.

\begin{definition}[Net occurrence candidate]\label{def:net-occ-candidate}
For each $i \in [n]$, let
$ C_i := \left(\sa[i], \sa[i]+\ell(i)-1\right) $
be the \emph{net occurrence candidate} at position $i$.
We write $T[C_i]$ for the string $ T[\sa[i] \ldots \sa[i]+\ell(i)-1]$,
the \emph{candidate string} at position $i$.
\end{definition}

In our approach for solving \singlenf,
\cref{thm:lcp-net-occ} is applied within a SA interval.
To solve \allnf, there is an appealing direct generalisation that 
would  apply \cref{thm:lcp-net-occ} to each candidate string
in the entire suffix array.
However, there is a confound: 
consecutive net occurrence candidates in the suffix array 
do not necessarily correspond to the same string. 
To mitigate this confound, 
we introduce a hash table representing a multiset that maps
each string with positive NF to a counter that keeps track of its~NF.
With this, \cref{algo:all-net-sa} iterates over each row of the suffix array, 
identifies the only net occurrence candidate $C_i$,
then increment the NF of $T[C_i]$, 
if $C_i$ is indeed a net occurrence.

\begin{remark}
\cref{algo:all-net-sa} is natural for \allnfextract, but 
cannot support \allnfreport without the extraction first.
In contrast, our second \allnf method, \cref{algo:sa-traversal}, 
which we will discuss next,
supports both \allnfextract and \allnfreport 
without having to complete the other first.\lipicsEnd
\end{remark}

We next consider the only strings that could have positive NF.
A string $S$ is \emph{branching}~\cite{conf/cpm/2000/maass}
in $T$ if  $S$ is the longest common prefix of two distinct suffixes of $T$.

\begin{lemma}\label{thm:branching}
For every non-branching string~$S$, $\nf(S)=0$.
\end{lemma}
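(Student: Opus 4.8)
The plan is to prove the contrapositive: I want to show that if $\nf(S) > 0$, then $S$ must be branching. By \cref{thm:net-char}, a positive NF means there exists at least one pair $(x,y) \in \Sigma \times \Sigma$ with $f(xS) = 1$, $f(Sy) = 1$, and $f(xSy) = 1$. In particular, $\nf(S) > 0$ forces $S$ to be repeated, so $f(S) \geq 2$. My first step is to recall the standard characterisation that connects branching to right extensions: a string $S$ is branching precisely when it admits at least two distinct right extensions $Sy_1$ and $Sy_2$ (with $y_1 \neq y_2$) that both occur in $T$, equivalently, when the occurrences of $S$ are not all followed by the same character. This is the natural reformulation of ``$S$ is the longest common prefix of two distinct suffixes''.

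The key step is then to argue that a repeated string $S$ with a \emph{unique} right extension cannot have positive NF. Suppose $S$ is non-branching. Then all $f(S) \geq 2$ occurrences of $S$ are followed by the same character, say $c$; that is, every occurrence of $S$ extends to an occurrence of $Sc$, so $f(Sc) = f(S) \geq 2$. I would now examine the condition in \cref{thm:net-char}: for any candidate pair $(x,y)$, the requirement $f(Sy) = 1$ cannot be met when $y = c$ (since $f(Sc) \geq 2$), and cannot be met when $y \neq c$ either (since then $f(Sy) = 0 \neq 1$, because no occurrence of $S$ is followed by any character other than $c$). Hence there is no $y$ with $f(Sy) = 1$, the set in \cref{thm:net-char} is empty, and $\nf(S) = 0$. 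A symmetric subtlety is that non-branching as defined here refers to right extensions (longest common prefix of two suffixes), so the argument runs through the right-extension condition; I do not need the left-extension condition at all, which keeps the proof clean.

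The main obstacle I anticipate is purely definitional bookkeeping: ensuring that the paper's definition of branching (longest common prefix of two distinct suffixes) is faithfully translated into the ``two distinct following characters'' statement, including the boundary case where one occurrence of $S$ sits at the very end of the text and has no following character. In that edge case the occurrence contributes an empty (or sentinel \$) right extension, and I must confirm that the counting in \cref{thm:net-char}, together with the sentinel convention, still yields $\nf(S) = 0$ whenever $S$ is non-branching. Once that translation is pinned down, the core implication is a one-line consequence of \cref{thm:net-char}, so I expect the writing to be short and the difficulty to lie entirely in stating the branching equivalence correctly rather than in any real calculation.
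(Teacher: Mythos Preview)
Your proposal is correct and follows essentially the same route as the paper: split into the unique case (zero by definition) and the repeated non-branching case, then invoke \cref{thm:net-char} to note that the sole right-extension character $c$ has $f(Sc)=f(S)\geq 2$ while every other $y$ gives $f(Sy)=0$, so no $y$ with $f(Sy)=1$ exists and the set is empty. The paper's proof is exactly this two-line argument, just without your additional (but reasonable) remarks on the sentinel boundary case.
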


\begin{proof}
If $S$ is unique, then, by \cref{def:net-freq},~$\nf(S) = 0$.
If $S$ is repeated but not branching, by definition, 
$S$ only has one right extension character $y^* \in \Sigma$ 
with $f(Sy^*) = f(S) \geq 2$, which means in \cref{thm:net-char}, 
there does not exist $y \in \Sigma$ such that $f(Sy) = 1$, thus, $\nf(S) = 0$.
\end{proof}

\noindent
Now, we make the following observation,
which aligns with the previous result.

\begin{observation}\label{thm:net-occ-branching}
Each net occurrence candidate is an occurrence of a branching string.
\end{observation}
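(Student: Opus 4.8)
The plan is to unpack \cref{def:lcp-ell} and observe that the candidate string is, by construction, a longest common prefix of two lexicographically adjacent suffixes. Recall that $\ell(i) = \max(\lcp[i], \lcp[i+1])$, so $\ell(i)$ equals either $\lcp[i]$ or $\lcp[i+1]$. Both of these LCP values measure a common prefix involving the suffix $T_{\sa[i]}$, and the candidate string $T[C_i] = T[\sa[i] \ldots \sa[i]+\ell(i)-1]$ is exactly the length-$\ell(i)$ prefix of $T_{\sa[i]}$. The goal is to show that this prefix coincides with one of those longest common prefixes, which I would do with a short case analysis.

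First I would handle the case $\ell(i) = \lcp[i]$. By definition of the LCP array, $\lcp[i] = \mathit{lcp}(T_{\sa[i-1]}, T_{\sa[i]})$, so the longest common prefix of the two distinct suffixes $T_{\sa[i-1]}$ and $T_{\sa[i]}$ has length $\lcp[i] = \ell(i)$, and that common prefix is precisely the length-$\ell(i)$ prefix of $T_{\sa[i]}$, namely $T[C_i]$. Hence $T[C_i]$ is a branching string by definition. Symmetrically, in the case $\ell(i) = \lcp[i+1]$, I would use $\lcp[i+1] = \mathit{lcp}(T_{\sa[i]}, T_{\sa[i+1]})$ to conclude that $T[C_i]$ is the longest common prefix of $T_{\sa[i]}$ and $T_{\sa[i+1]}$, again branching. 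Since $\ell(i)$ is the maximum of the two values, at least one case always applies, so the split is exhaustive.

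The point that most deserves care is not the case split itself but the claim that $T[C_i]$ is the \emph{longest} common prefix of the suffix pair, rather than merely \emph{a} common prefix. This is immediate once one notes that $\lcp[i]$ (respectively $\lcp[i+1]$) is defined as the \emph{length} of the longest common prefix of the relevant adjacent suffix pair, so taking the length-$\ell(i)$ prefix of $T_{\sa[i]}$ realises that longest common prefix exactly, with no longer match possible. I would also briefly dispatch the degenerate boundary situations: when $\ell(i) = 0$ the candidate string is empty, which is still the longest common prefix of any two suffixes beginning with distinct characters and hence branching, and the usual conventions $\lcp[1] = 0$ together with the treatment of the array endpoints keep the two cases exhaustive throughout the range visited by \cref{algo:all-net-sa}. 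With this, every net occurrence candidate is an occurrence of a branching string.
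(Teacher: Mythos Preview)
Your proof is correct and follows essentially the same route as the paper: a case split on whether $\ell(i)=\lcp[i]$ or $\ell(i)=\lcp[i+1]$, identifying $T[C_i]$ as the longest common prefix of the corresponding pair of lexicographically adjacent suffixes. Your additional remarks on why the prefix is \emph{longest} and on the boundary case $\ell(i)=0$ are sound extra care that the paper's terse proof omits.
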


\begin{proof}
For each $i \in [n]$, consider the string $S = T[C_i]$
and the following two cases:
when $\ell(i) = \lcp[i]$, 
$S$ is the longest common prefix of $T_{\sa[i-1]}$ and $T_{\sa[i]}$;
when $\ell(i) = \lcp[i+1]$,
$S$ is the longest common prefix of $T_{\sa[i]}$ and $T_{\sa[i+1]}$.
In both cases, $S$ is branching.
\end{proof}

\noindent
The SA intervals of branching strings are better known as the \emph{LCP intervals} in the literature.

\begin{definition}[LCP interval~\cite{journal/jda/2004/abouelhoda}]\label{def:lcp-itv}
An \emph{LCP interval} of LCP value $\ell$, written as $\ell\textnormal{-}\langle l, r \rangle$, 
is an interval $\langle l, r \rangle$ that satisfies the following:
$\lcp[l] < \ell$,
$\lcp[r+1] < \ell$,
for each $l+1 \leq i \leq r$, 
$\lcp[i] \geq \ell$, and
there exists $l+1 \leq k \leq r$ such that $\lcp[k] = \ell$, 
\end{definition}

Traversing the LCP intervals is a standard task and 
can be accomplished by a stack-based algorithm:
examples include Figure~7 in~\cite{conf/cpm/2001/kasai} and  
Algorithm~4.1 in~\cite{journal/jda/2004/abouelhoda}.
These algorithms were originally conceived 
for emulating a bottom-up traversal of the internal nodes in a suffix tree
using a suffix array and an LCP array.
In a suffix tree, an internal node has multiple child nodes 
and thus its corresponding string is branching.

Thus, \cref{algo:sa-traversal} is an adaptation 
of the LCP interval traversal algorithms 
in~\cite{journal/jda/2004/abouelhoda, conf/cpm/2001/kasai}
with an integration of our NF computation.
Notice that in the $i^\text{th}$ iteration of the algorithm, 
we set Boolean variable $\texttt{for\_next}$ to true if $\ell(i) = \lcp[i+1]$.
That is, $\texttt{for\_next}$ is true if the current net occurrence candidate that we are examining 
corresponds to an LCP interval that will be pushed onto the stack 
in the next iteration, $i+1$.

Note that the correctness of \Crefrange{algo:single-net-sa}{algo:sa-traversal}
follows from the correctness of \cref{thm:lcp-net-occ}.

\begin{figure}[t]
\begin{footnotesize} 
\begin{algorithm}[H]
\caption{for \allnfreport or \allnfextract}
\label{algo:sa-traversal}
\setlength{\columnsep}{26pt}
\begin{multicols}{2}
$s \gets \emptyset$; \; 
\tcp*[h]{
an empty stack; the standard stack operations used
in the algorithm are: $\mathit{s.push(\,), s.top(\,), \text{~and~} s.pop(\,)}$
} \;
\BlankLine
$s.\mathit{push}( \, \langle 0, 0, 0 \rangle \, )$; \;\label{algo-line:init}
\tcp*[h]{
an LCP interval $\mathit{len}\textnormal{-}\langle \mathit{lb}, \mathit{rb} \rangle$ 
with NF $\nf$ is written as $\langle \mathit{len}, \mathit{lb}, \nf \rangle$;  
note that $\mathit{rb}$ is not used in this algorithm
} \;
\tcp*[h]{
$\langle 0, 0, 0 \rangle $ is the LCP interval for the empty string
}\;
\BlankLine
$\texttt{for\_next} \gets \mathit{false}\,$; \;
\tcp*[h]{
$\texttt{for\_next}\mathit{=true}$ indicates that the current  
net occurrence is \textbf{for} the interval that will be pushed onto the stack 
in the \textbf{next} iteration
}
\BlankLine
\SetKwFunction{Fproc}{process\_interval}
\SetKwProg{Fn}{function}{:}{}
\Fn{\Fproc{$I$}}{
    \tcp*[h]{$I$: an LCP interval}\;
    \If{$I.\nf  > 0$}{
        $j \gets \sa[I.\mathit{lb}]$;\; 
        $S \gets \mathit{T[j \ldots j+I.\mathit{len}]}$;\;
            \tcp*[h]{to be reported or extracted}\;
           $\nf(S) = I.\nf$; 
    }
}
\For{$i \gets 2\ldots n$}{
$\mathit{lb} \gets i-1$; \;
\While{$\lcp[i] < s.\mathit{top(\,).len} $}{
    $I \gets s.\mathit{pop}(\,)$; \;
    \Fproc{$I$};\;
    $\mathit{lb} \gets I.\mathit{lb}\,$; 
}
\If{$\lcp[i] > s.\mathit{top(\,).len} $}{
	$s.\mathit{push( \, \langle \, \lcp[i], lb, 0 \, \rangle \, )}$;\;
	\If{\texttt{for\_next}}{
		$s.\mathit{top(\,).\nf} \gets s.\mathit{top(\,).\nf} + 1$;\;
		$\texttt{for\_next} \gets \mathit{false}\,$; 
	}
}
$j \gets \lf[i]$;\;
\If{$\ell(i) \geq \ell(j)$}{\label{algo-line:nf-check}
	\If{$\lcp[i] = \ell(i)$}{
		$s.\mathit{top(\,).\nf} \gets s.\mathit{top(\,).\nf} + 1$;
	}
\lElse {$\texttt{for\_next} = \mathit{true}\,$;}
}
}
\While{$s$ is not empty}{\label{algo-line:while}
    \Fproc{$s.\mathit{pop}(\,)$};\;
}
\end{multicols}
\BlankLine
\BlankLine  
\end{algorithm}
\end{footnotesize}
\end{figure}

\subparagraph{Analysis of the ALL-NF algorithms.}

When \cref{algo:sa-traversal} is used for \allnfreport, 
it runs in $\bigO(n)$ time in the worst case.
We can also use \cref{algo:sa-traversal} for \allnfextract.
To analyse the asymptotic cost for \allnfextract 
(using either \cref{algo:all-net-sa} or \cref{algo:sa-traversal}), 
we first define the following.

\begin{definition}\label{def:nf-bounds}
Given an input text $T$, let $\mathcal{S} := \{ S \prec T : \nf(S) > 0 \}$ 
be the set of strings with positive NF in $T$. 
Then, we define
$ N := \sum_{S \in \mathcal{S}} |S|$ and $
L := \sum_{S \in \mathcal{S}} \nf(S) \cdot |S|$.
\end{definition}
With these definitions, we first present the following bounds.

\begin{lemma}\label{thm:pos-nf-bd}
$\sum_{S \in \mathcal{S}} \nf(S)  \leq n \quad \text{and} \quad |\mathcal{S}| \leq n$.
\end{lemma}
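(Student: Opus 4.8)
The plan is to count net occurrences. Since the NF of a string equals its number of net occurrences (the equivalence noted just before \cref{def:net-occ}, underpinned by \cref{thm:net-char} together with the fact that $f(xSy) \in \{0,1\}$ whenever $f(xS)=f(Sy)=1$), the quantity $\sum_{S \in \mathcal{S}} \nf(S)$ is exactly the total number of net occurrences in $T$. So the first inequality reduces to showing that $T$ has at most $n$ net occurrences.

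To bound the net occurrences I would use \cref{thm:lcp-net-occ}. Fix a text position $s$ and set $i := \isa[s]$. If some occurrence $(s,e)$ is a net occurrence then, writing $S := T[s \ldots e]$, the theorem forces $|S| = \ell(i)$; hence $e = s + \ell(i) - 1$ is completely determined by $s$. Thus each of the $n$ text positions is the starting position of at most one net occurrence, giving at most $n$ net occurrences in total, and therefore $\sum_{S \in \mathcal{S}} \nf(S) \leq n$.

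The second inequality is then immediate: every $S \in \mathcal{S}$ satisfies $\nf(S) \geq 1$, so $|\mathcal{S}| = \sum_{S \in \mathcal{S}} 1 \leq \sum_{S \in \mathcal{S}} \nf(S) \leq n$.

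The only step requiring care is the counting equivalence in the first paragraph---checking that the pairs $(x,y)$ enumerated by \cref{thm:net-char} are in bijection with the net occurrences of $S$---but this is short and already signalled in the text. The conceptual heart is the ``one net occurrence per starting position'' argument, and it follows directly from the forced equality $|S| = \ell(i)$, so I anticipate no real obstacle beyond stating these observations crisply.
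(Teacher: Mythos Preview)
Your proposal is correct and matches the paper's proof almost exactly: both argue that each position admits at most one net occurrence because \cref{thm:lcp-net-occ} forces $|S|=\ell(i)$, and then derive $|\mathcal{S}|\leq n$ from the first bound. The only cosmetic difference is that the paper indexes by suffix-array position $i$ (via the net occurrence candidate $C_i$) whereas you index by text position $s$ with $i=\isa[s]$; since $\isa$ is a bijection, the two formulations are equivalent.
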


\begin{proof}
First observe that for each suffix array position $i \in [n]$,
there is at most one net occurrence candidate 
and thus at most one net occurrence.
Then, the total number of net occurrences is bounded by $n$,
which means the sum of NFs of all the strings in $T$ is also bounded by~$n$.
Furthermore, this implies that there are at most $n$ distinct strings 
in $T$ with positive NF.
\end{proof}

For \allnfextract,
when a hash table is used,
\cref{algo:all-net-sa} takes $\bigO(L)$ time
while \cref{algo:sa-traversal} only takes $\bigO(N)$,
both in expectation.
Note that for each $S \in \mathcal{S}$,
in \cref{algo:all-net-sa},  $S$ is hashed $\nf(S)$ times,
but in \cref{algo:sa-traversal}, $S$ is only hashed once.

Since $N \leq L$, 
a lower bound on $N$ is also a lower bound on $L$, and
an upper bound on $L$ is also an upper bound on $N$.
The next two results present a lower bound on $N$
and an upper bound on $L$.

\begin{lemma}
$N \in \Omega(n) \, .$
\end{lemma}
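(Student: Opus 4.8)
The plan is to exhibit a single family of inputs witnessing the lower bound, and the Fibonacci words from \cref{sec:fib-words} are precisely such a family. First I would take the input text to be $F_i$, so that $n = f_i$, and invoke the two case-study results already established: by \cref{thm:f-i-2-net-occ} the string $F_{i-2}$ has positive NF, and by \cref{thm:s-i-net-occ} the string $S_i = F_{i-1}[1 \ldots f_{i-1}-2]$ has positive NF. Hence both $F_{i-2}$ and $S_i$ belong to the set $\mathcal{S}$ of \cref{def:nf-bounds}.

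Second, I would verify that these two strings are distinct, so that each contributes separately to the sum $N = \sum_{S \in \mathcal{S}} |S|$. Since $|F_{i-2}| = f_{i-2}$ and $|S_i| = f_{i-1} - 2$, the two differ in length whenever $f_{i-1} - 2 \neq f_{i-2}$, that is, whenever $f_{i-3} \neq 2$. For $i \geq 7$ we have $f_{i-3} \geq f_4 = 3$, so the two strings have different lengths and are therefore distinct members of $\mathcal{S}$.

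Third, I would add the two lengths to lower-bound $N$ directly:
\[
N \geq |F_{i-2}| + |S_i| = f_{i-2} + (f_{i-1} - 2) = f_i - 2 = n - 2 \,.
\]
This yields $N \geq n - 2 \in \Omega(n)$, as required.

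There is essentially no hard step: all of the combinatorial effort was already expended in the Fibonacci case study, and the only point needing care is confirming that $F_{i-2}$ and $S_i$ are genuinely distinct (and hence counted separately in $N$ rather than collapsing to one string), which the length comparison settles. The one conceptual observation worth stressing is that a lower bound of the form $N \in \Omega(n)$ demands only one infinite family of inputs achieving $N \geq c\,n$ for a constant $c$, so a single well-chosen family such as the Fibonacci words suffices, with the constant here being arbitrarily close to~$1$.
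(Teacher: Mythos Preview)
Your proof is correct and follows essentially the same approach as the paper: both invoke \cref{thm:f-i-2-net-occ} and \cref{thm:s-i-net-occ} on the Fibonacci family to obtain $N(F_i)\geq f_{i-2}+|S_i|=f_i-2$. Your version is in fact slightly cleaner, since you read off $|S_i|=f_{i-1}-2$ directly from the definition rather than expanding $S_i=F_{i-2}\,Q_i$ and summing $\sum_{j=2}^{i-5}f_j$ as the paper does, and you make explicit the distinctness of $F_{i-2}$ and $S_i$, which the paper leaves implicit.
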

\begin{proof}
We use our results on Fibonacci words.
From \cref{thm:f-i-2-net-occ} and \cref{thm:s-i-net-occ},
$N(F_i) 
\geq |F_{i-2}| +  | F_{i-2} \ Q_{i} | 
= f_{i-2} +  \left(  f_{i-2}  + \sum_{j=2}^{i-5} f_j \right).$
Using the equality $\sum_{j=1}^{i} f_j = f_{i+2} - 1$, we have 
$L(F_i) \geq f_{i-2} +  \left(  f_{i-2} + f_{i-3} - 1 - f_1 \right)$.
With further simplification,
$N(F_i) \geq  f_{i} - 2
$.
\end{proof}

\noindent
We can similarly show that $L(F_i) \geq f_i + f_{i-2} - 2$.
Next, we present an upper bound on $L$.

\begin{theorem}\label{thm:main-nf-bound}
$L \in \bigO(n \log \delta)  \, . $
\end{theorem}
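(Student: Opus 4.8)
The plan is to recast $L$ as a sum of lengths over net occurrences and then charge each such length to an irreducible LCP value, after which \cref{thm:sum-irr-lcp} closes the bound. First I would rewrite the quantity. Since $\nf(S)$ counts the net occurrences of $S$ and every net occurrence of $S$ has length $|S|$, \cref{def:nf-bounds} gives $L = \sum_{S \in \mathcal{S}} \nf(S)\cdot|S| = \sum_{(s,e)\text{ a net occurrence}} (e - s + 1)$. By \cref{thm:lcp-net-occ}, each net occurrence $(s,e)$ coincides with the net occurrence candidate $C_i$ at the single position $i = \isa[s]$, whose string $T[C_i]$ has length $\ell(i)$; distinct net occurrences thus correspond to distinct positions $i$. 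Hence $L = \sum_{i \,:\, C_i \text{ is a net occurrence}} \ell(i)$, and it suffices to bound this sum.

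The crux is to show that, whenever $C_i$ is a net occurrence, $\ell(i)$ equals an irreducible LCP value. Since $\ell(i) = \max(\lcp[i], \lcp[i+1])$, I would split on which neighbour attains the maximum. Suppose $\lcp[i] \geq \lcp[i+1]$, so $\ell(i) = \lcp[i]$; then $S := T[C_i]$ is a common prefix of $T_{\sa[i-1]}$ and $T_{\sa[i]}$ of length $\lcp[i]$, and its left-extension character is $x = \bwt[i]$. By \cref{def:net-occ}, a net occurrence forces $f(xS) = 1$. If $\lcp[i]$ were reducible, i.e. $\bwt[i-1] = \bwt[i] = x$, then $xS$ would be a prefix of both $T_{\sa[i-1]-1}$ and $T_{\sa[i]-1}$, so $f(xS) \geq 2$, a contradiction; therefore $\lcp[i]$ is irreducible. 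The complementary case $\lcp[i+1] > \lcp[i]$ is symmetric, with $S$ a common prefix of $T_{\sa[i]}$ and $T_{\sa[i+1]}$ and left character $\bwt[i]$, showing that $\lcp[i+1]$ is irreducible.

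I would finish with a simple charging argument. Each net occurrence at position $i$ contributes $\ell(i)$ and is charged to a single irreducible entry, $\lcp[i]$ or $\lcp[i+1]$ according to the case above. A fixed irreducible entry $\lcp[k]$ can only be charged from position $k$ (first case) or position $k-1$ (second case), hence by at most two net occurrences. Consequently $L \leq 2 \sum_{k \,:\, \lcp[k]\text{ irreducible}} \lcp[k]$, which is $\bigO(n\log\delta)$ by \cref{thm:sum-irr-lcp}, as required.

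The step I expect to be the main obstacle is the bridge in the middle paragraph: translating the purely combinatorial net-occurrence condition ``$f(xS) = 1$'' into the BWT criterion for irreducibility ``$\bwt[i-1] \neq \bwt[i]$''. Pinning down the left-extension character as $x = \bwt[i]$, verifying that $S$ is genuinely a prefix of the neighbouring suffix so that prepending $x$ yields a second occurrence of $xS$, and handling both orientations correctly is where the reasoning must be precise. The final double-charging bookkeeping, by contrast, is routine and only costs a constant factor.
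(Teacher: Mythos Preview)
Your proposal is correct and follows essentially the same route as the paper's own proof: rewrite $L$ as $\sum_{i:C_i\text{ net occ.}}\ell(i)$, argue that the unique left extension forces the relevant adjacent LCP value to be irreducible, charge each net occurrence to that irreducible value with multiplicity at most two, and invoke \cref{thm:sum-irr-lcp}. Your treatment of the ``main obstacle'' is in fact more careful than the paper's, which simply asserts that ``$\lcp[i]$ or $\lcp[i+1]$ is irreducible'' without pinning down (as you do via the case split on which attains the max) that it is precisely $\ell(i)$ itself that is irreducible.
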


\begin{proof}
First observe that
$L = \sum_{i \in [n] \, : \, \mathit{net\texttt{\_}occ}(C_i)} \ell(i)$
where $\mathit{net\texttt{\_}occ}(C_i)$ denotes that $C_i$ is a net occurrence.
Thus, $L$ can be expressed as the sum of certain LCP values.
Next, when $C_i$ is a net occurrence, 
its left extension is unique, 
which means $\lcp[i]$ or $\lcp[i+1]$ is irreducible.
Notice that each irreducible $\lcp[i]$ contributes to $L$ 
at most twice due to $C_i$ or $C_{i-1}$.
It follows that $L$ is at most twice the sum of irreducible LCP values.
Using \cref{thm:sum-irr-lcp}, we have the desired result.
\end{proof}

\section{Experiments}\label{sec:experiments}

In this section, we evaluate the effectiveness of 
our \singlenf and \allnf algorithms empirically.
The datasets used in our experiments are news collections from 
TREC 2002~\cite{conf/trec/2003/voorhees} and 
DNA sequences from Genbank~\cite{journal/nar/2018/benson}. 
Statistics are in \cref{tab:dataset-stats}.
Relatively speaking, there are far fewer strings with positive NF in the DNA data because, 
with a smaller alphabet, the extensions of strings are less variant, 
and DNA is more nearly random in character sequence than is English text.
All of our experiments are conducted on a server
with a 3.0GHz Intel(R) Xeon(R) Gold 6154 CPU.
All the algorithms are implemented in C\texttt{++} and 
GCC 11.3.0 is used.
Our implementation is available at
\url{https://github.com/peakergzf/string-net-frequency}.

\begin{table}[t]
\begin{center}
\caption{Statistics for each dataset,~$T$.
The first three datasets are news collections.
\cref{def:nf-bounds} explains~$\mathcal{S}$, $N$, and~$L$.
As described in \cref{sec:single-nf-experiments}, 
it is practical to bound the length of each query by~$35$:
in parentheses, therefore,
we also include the values of $N$ and $L$ with a length upper bound (u.b.) 
of 35 on the individual strings.
That is, we
replace $\mathcal{S}$ with $\{ S \prec T : \nf(S) > 0 \text{ and } |S| \leq 35 \}$.
Also recall that $L$ and $N$ are used in the asymptotic costs of our \allnf algorithms.
}
\label{tab:dataset-stats}
\begin{tabular}{@{}l|r|r|r|r|r@{}}
$T$ & $n \; (\times 10^6)$ & $|\Sigma|$ & $|\mathcal{S}|$ & $N~(\text{with u.b.})$ & $L~(\text{with u.b.})$
\\ \hline
NYT & 435.3 & 89 & $0.1n$ & $1.7n~(1.4n)$ & $2.7n~(2.2n)$ \\
APW & 152.2 & 92 & $0.1n$ & $1.6n~(1.3n)$ & $2.6n~(2.1n)$ \\
XIE &  98.9 & 91 & $0.1n$ & $1.7n~(1.4n)$ & $2.8n~(2.2n)$ \\
DNA & 505.9 & 4  & $0.001n$ & $0.5n~(0.005n)$ & $1.1n~(0.007n)$
\end{tabular}
\end{center}
\end{table}

\subsection{SINGLE-NF Experiments}\label{sec:single-nf-experiments}

For the news datasets, each query string is randomly selected 
as a concatenation of several consecutive space-delimited strings.
We set a query-string length lower bound of~$5$ because 
we regard very short strings as not noteworthy.
We set a practical upper bound of~$35$ because there are 
few strings longer than~$35$ with positive~NF.
(See NF distribution shown in \cref{fig:net-freq-dist}.)
For DNA, each query string is selected by randomly 
choosing a start and end position from the text.

\begin{figure}[t]
\centering
\begin{subfigure}{0.38\columnwidth}
    \centering
    \includegraphics[width=\linewidth]{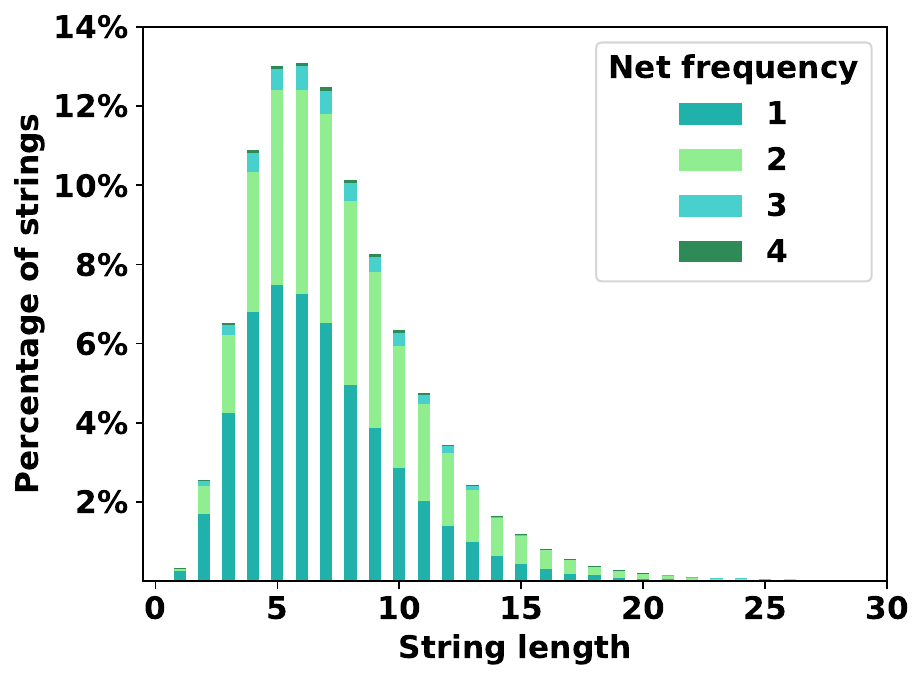}
    \caption{\centering $n=10^6$}
\end{subfigure}
\hfil
\begin{subfigure}{0.38\columnwidth}
    \centering
    \includegraphics[width=\linewidth]{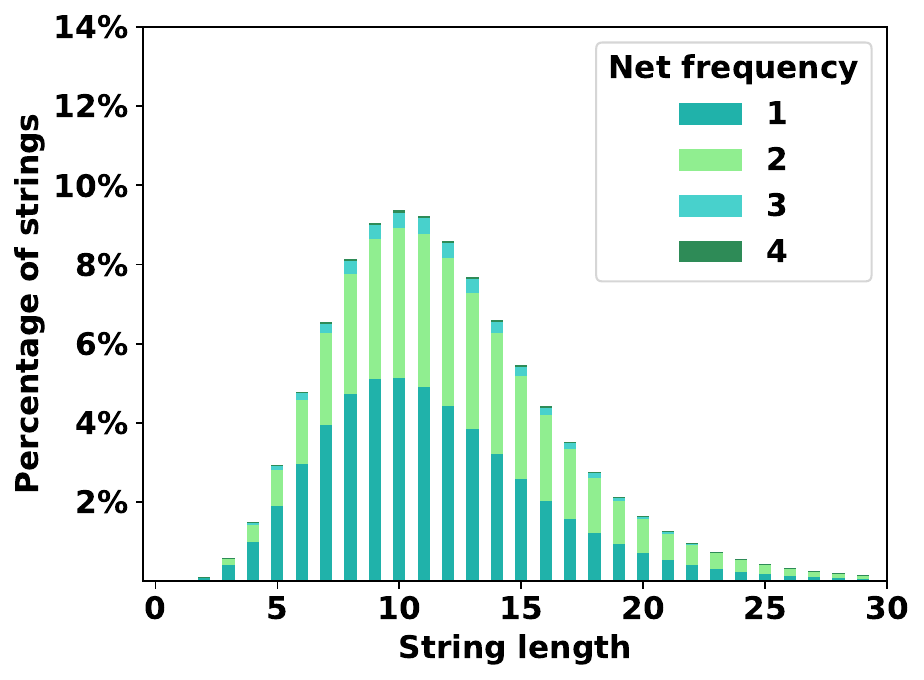}
    \caption{\centering $n=10^8$}
\end{subfigure}
\caption{Net frequency distribution on input texts of two different lengths drawn from NYT corpus. 
For each string length the column shows the percentage of strings with positive NF; 
strings of NF larger than~$4$ are so rare that they are not visible in this plot.
}
\label{fig:net-freq-dist}
\end{figure}

\subparagraph{Algorithms.}
As discussed in \cref{sec:intro},
there are no prior efficient algorithms for \singlenf.
Thus, we came up with two reasonable baselines, CSA and HSA,
and compare their performance against our new efficient algorithms, CRL and ASA.

\begin{itemize}
\item \underline{CRL}: 
presented in \cref{algo:single-net-sa}.
We implement the algorithm for coloured range listing (CRL) following \cite{conf/soda/2002/muthukrishnan},
which uses structures for \emph{range minimum query}~\cite{journal/siamcomp/2011/fischer}.

\item \underline{ASA}: 
removing the CRL augmentation 
from  \cref{algo:single-net-sa}, 
but keeping all other augmentations, 
hence the name \emph{augmented suffix array} (ASA).
Specifically, we replace ``$\mathit{CRL}_{\mathit{BWT}}(l, r)$'' 
with ``$\langle l, r \rangle$'' 
in Line \ref{algo-line:crl} of \cref{algo:single-net-sa}.

\item \underline{CSA}: 
algorithmically the same as ASA, 
but the data structure used is the \emph{compressed suffix array} 
(CSA) \cite{journal/csur/2021/navarro}.
We use the state-of-the-art implementation of CSA from the 
\texttt{SDSL} library
(\url{https://github.com/simongog/sdsl-lite}).
Specifically, their \emph{Huffman-shaped wavelet tree}~\cite{
conf/sea/2014/gog, journal/jea/2013/gog}
was chosen based on our preliminary experimental results.

\item \underline{HSA}: 
presented in \cref{algo:baseline-hash-sa}.
\emph{Hash table-augmented suffix array} (HSA)
is a naive baseline approach that 
does not use LF, LCP, or CRL, but
only augments the suffix array with hash tables
to maintain the frequencies of the extensions.
These hash tables are later used to determine if an extension is unique or not.
\end{itemize}

\begin{figure}[t]
\begin{footnotesize}
\begin{minipage}[t][][t]{0.51\textwidth}

\begin{algorithm}[H]
\caption{Subroutine in \cref{algo:baseline-hash-sa}}
\label{algo:baseline-hash-sa-func}
\SetKwFunction{Fproc}{extension\_frequencies}
\SetKwProg{Fn}{function}{:}{}
\tcp*[h]{
return three hash tables containing
the frequencies of the left, right, and bidirectional extensions of $S$,
together with the extension characters
}\;
\Fn{\Fproc{$S$}}{
$\langle l, r \rangle \gets$ the SA interval of $S$;\;
\tcp*[h]{three hash tables:}\;
$L \gets \emptyset, R \gets \emptyset, B \gets \emptyset$; \;
\For{$i = l, \ldots, r$}{
    $x \gets T[\sa[i] - 1]$; \quad
    $L[x] \gets L[x] + 1$; \;
    $y \gets T[\sa[i] + |S|]$; \quad
    $R[y] \gets R[y] + 1$;\; 
    $B[(x,y)] \gets B[(x,y)] +  1$;
}
\Return $L, R, B$;
}
\end{algorithm}

\end{minipage}
\hfill
\begin{minipage}[t][][t]{0.45\textwidth}

\begin{algorithm}[H]

\caption{\singlenf baseline}
\label{algo:baseline-hash-sa}
\Input{$S \leftarrow$ a string}
\tcp*[h]{see \cref{algo:baseline-hash-sa-func}}\;
$L, R, B \gets \Fproc(S)$;\;
$\nf \gets 0$; \tcp*[h]{the NF of $S$}\;
\For{$(x, y) \in B$}{
    \tcp*[h]{unique left and right extensions}\;
    \If{$L[x] = 1$~and~$R[y] = 1$}{$\nf \gets \nf + 1$;}
}
\Return $\nf$
\end{algorithm}

\end{minipage}
\end{footnotesize}
\end{figure}

\noindent
The asymptotic running times of CRL, ASA, CSA, and HSA
are $\bigO(m + \sigma), \bigO(m + f),
\bigO(m + f \log \sigma)$, and $\bigO(m + f \cdot \sigma)$,
respectively,
where $\sigma$ is the size of the alphabet
and the query has length $m$ and frequency $f$.

Comparing CRL against ASA,
we expect CRL to be faster for more frequent queries as
ASA needs to enumerate the entire SA interval of the query string
while CRL does not.
ASA is compared against CSA to illustrate the trade-off 
between query time and space usage:
ASA is expected to be faster while CSA is expected to be more space-efficient,
and indeed this trade-off is observed in our experiments.
We also compare ASA against HSA to demonstrate 
the speedup provided by the augmentations of LF and LCP.

\subparagraph{Results.}
The average query time of each algorithm is presented in \cref{tab:single-nf-res}.
Since the results from the three news datasets exhibit similar behaviours, 
only the results from NYT are included: 
henceforth, NYT is the representative for the three news datasets.
Overall, our approaches, CRL and ASA, 
outperform the baseline approaches, HSA and CSA,
on both NYT and DNA,
but all the algorithms are slower on the DNA data because 
the query strings are much more frequent. 
Notably, CRL outperforms the baseline 
by a factor of up to almost~$1000$, 
across all queries, validating the improvement in the asymptotic cost.
Since non-existent and unique queries have zero NF by definition, we next specifically look at the results on repeated queries.

All approaches are slower when the query string is repeated, 
as further NF computation is required after
locating the string in the data structure.
For this reason we additionally
report results on queries with positive NF.
For NYT, similar relative behaviours are observed,
but for DNA, all algorithms are significantly faster, likely because
strings with positive NF on DNA data are shorter and have much lower frequency.
For the same reason of queries being less frequent, 
ASA is faster than CRL on DNA queries with positive NF
because the advantage of CRL over ASA is more apparent 
when the queries are more frequent.

\begin{table}[t]
\begin{center}
\caption{
Average \singlenf query time (in microseconds) over all the queries, 
repeated queries ($f \geq 2$), and queries with positive NF ($\nf > 0$).
The query set from NYT has $2 \times 10^6$ queries in total, 
$38.3\%$ repeated, while $1.4\%$ have positive NF.
The query set from DNA has $3 \times 10^6$ queries in total, 
$51.9\%$ repeated, while $2.5\%$ have positive NF.
}
\label{tab:single-nf-res}%
\begin{tabular}{c|c|r|r|r}
Dataset & Algorithm  & All    & $f \geq 2$ & $\nf > 0$ \\  \hline
\multirow{4}{*}{NYT} 
& CRL     & \textbf{3.9}   & \textbf{7.3}   & \textbf{12.6}  \\ 
& ASA     & 9.4   & 21.4   & 39.7  \\ 
& HSA     & 695.0  & 1813.9 & 3755.4 \\
& CSA     & 1002.1 & 2595.7 & 4884.3 \\
 \hline
\multirow{4}{*}{DNA} 
& CRL     & \textbf{6.8}   & \textbf{10.1}   & 5.5  \\ 
& ASA     & 64.9   & 122.4   & \textbf{3.3}  \\ 
& HSA     & 5655.5  & 10884.9 & 11.8 \\
& CSA     & 6348.6 & 12209.0 & 10.9 
\end{tabular}
\end{center}
\end{table}

\subparagraph{Further results on CRL and ASA.}
Previously we have seen that, empirically,
the augmentation of CRL accelerates our \singlenf algorithm,
but is that the case for queries of different frequency and length?
We now investigate how query string frequency and length 
contribute to \singlenf query time of CRL and ASA.

For NYT we do not consider strings with frequency greater than 2000, 
as we observe that these are rare outliers that obscure the overall trend.
For each frequency $f \in [0, 2000]$ (or length $l \in [5, 35]$) 
and each algorithm $A$, 
a data point is plotted as the average time taken by $A$ 
over all the query strings with frequency~$f$ (or length~$l$).
Since there are far more data points in the frequency plot 
than the length plot, 
we use a scatter plot for frequency (left of \cref{fig:asa-vs-crl-nyt}) 
while a line plot for length (right of \cref{fig:asa-vs-crl-nyt}).
For frequency, as we anticipated, CRL is faster than ASA on more frequent queries.
Empirically, on NYT, the turning point seems to be around 700. 
Note that the plot for CRL seems more scattered,
likely because its time usage 
does not depend on frequency, but depends on query length.
For length, 
on very short queries, CRL is faster because these queries are highly frequent.
Then, generally, query time does not increase as the query strings become longer 
because they tend to correspondingly become less frequent.  
This suggests that length is not as significant as frequency in 
affecting the query time of these approaches.

\begin{figure}[t]
\centering
\begin{subfigure}{0.35\columnwidth}
    \centering
    \includegraphics[width=\linewidth]{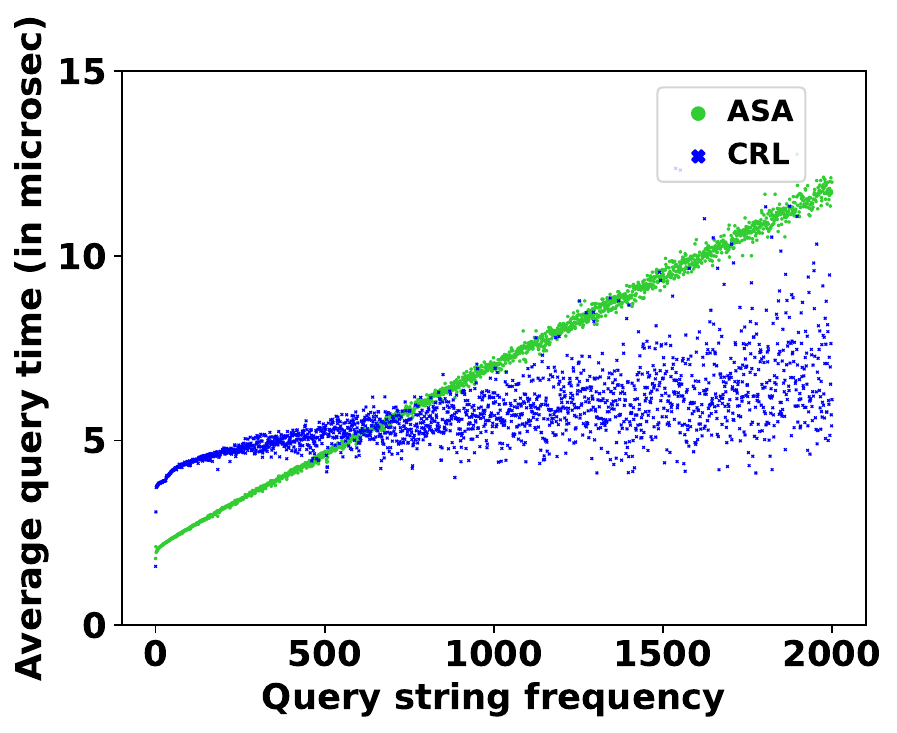}
\end{subfigure}
\hfil
\begin{subfigure}{0.35\columnwidth}
    \centering
    \includegraphics[width=\linewidth]{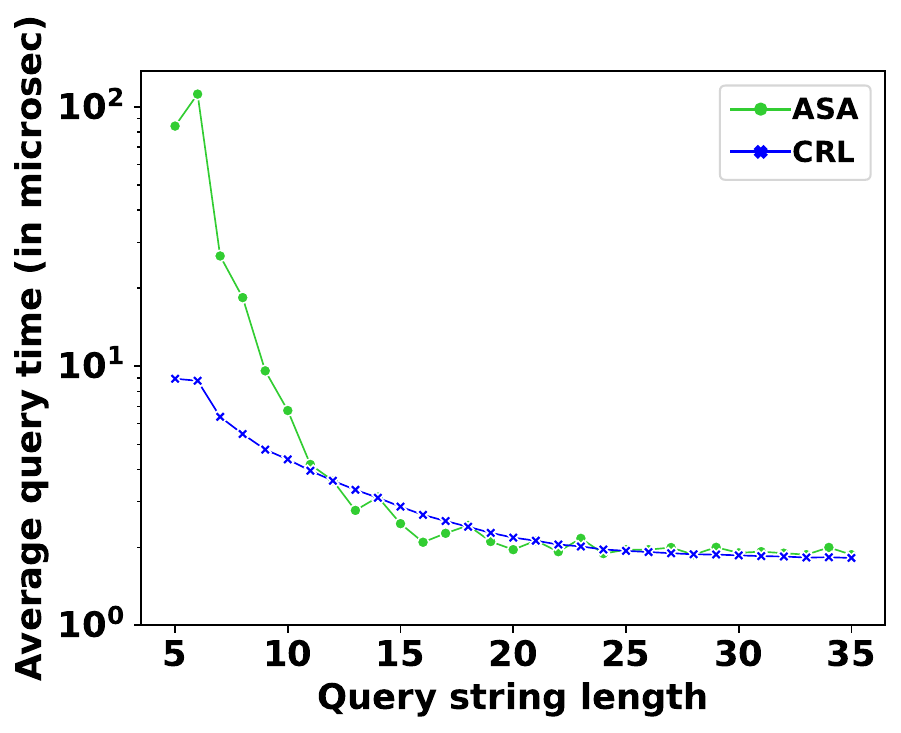}
\end{subfigure}
\caption{
Average \singlenf query time (in microseconds) of ASA and CRL
against query string frequency (left) and length (right) on the NYT dataset.
Note that the $y$-axis on the right is scaled logarithmically.
}
\label{fig:asa-vs-crl-nyt}
\end{figure}

We similarly examine the effect of frequency and length on ASA and CRL query time 
using the DNA data.
As the query strings are more frequent in DNA than in NYC, 
we use a higher frequency upper bound of 5000 and 
the results are presented in \cref{fig:asa-vs-crl-dna}.
The most notable difference between the DNA and NYT is that 
the former has a much smaller alphabet. 
Thus, the gap between ASA and CRL becomes more evident 
for more frequent queries.
Additionally, it is notable that the frequency plot for CRL on the DNA dataset appears less scattered compared to the NYT plot,
also because of
a much smaller alphabet.

\begin{figure}[t]
\centering
\begin{subfigure}{0.35\columnwidth}
    \centering
    \includegraphics[width=\linewidth]{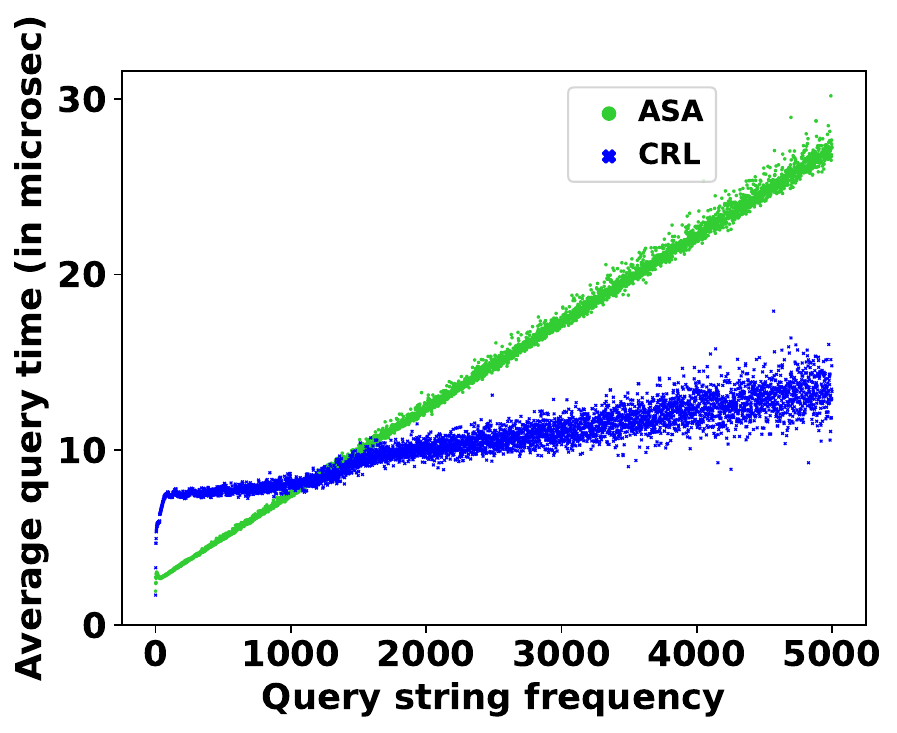}
\end{subfigure}
\hfil
\begin{subfigure}{0.35\columnwidth}
    \centering
    \includegraphics[width=\linewidth]{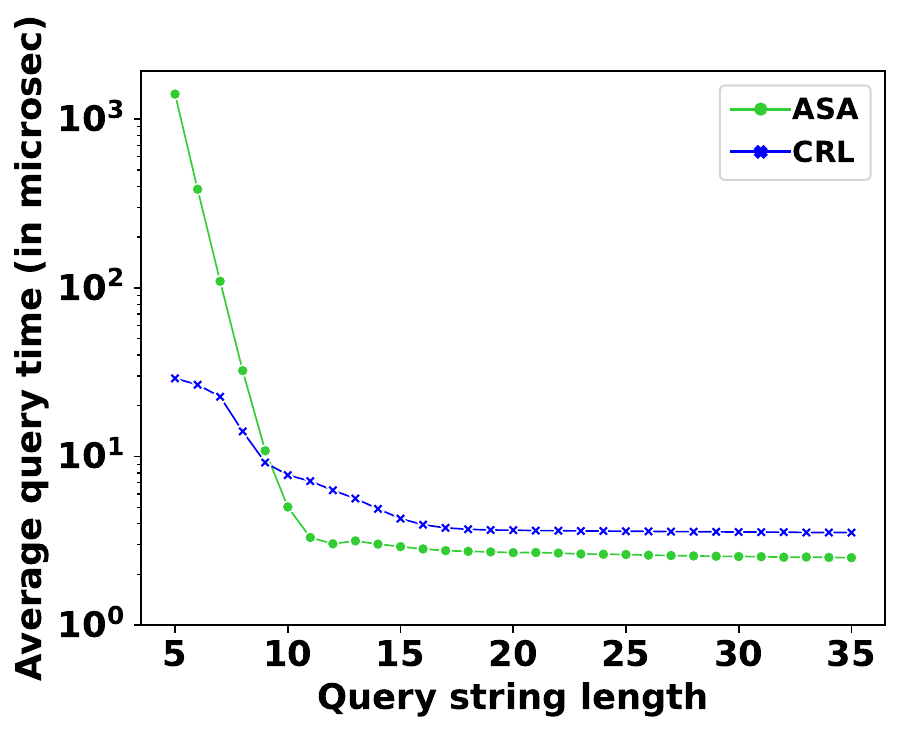}
\end{subfigure}
\caption{
Average \singlenf query time (in microseconds) of ASA and CRL
against query string frequency (left) and length (right) on the DNA dataset.
Note that the $y$-axis on the right is scaled logarithmically.
}
\label{fig:asa-vs-crl-dna}
\end{figure}

\begin{remark}
Although asymptotically CRL is faster than ASA,
we have seen that empirically ASA is faster on less frequent queries.
This suggests a hybrid algorithm that switches between ASA and CRL
depending on the frequency of the query string.
\end{remark}

\subsection{ALL-NF Experiments}

In this section, we present the analyse and empirical results 
for the two tasks \allnfreport and \allnfextract.
In this setting, each dataset from \cref{tab:dataset-stats} 
is taken directly an as input text, without having to generate queries.
Each reported time is an average of five runs.
As seen in \cref{tab:all-nf-pre-report},
for \allnfextract, \cref{algo:all-net-sa} is consistently faster 
than \cref{algo:sa-traversal} in practice, 
even though $L \geq N$ (see \cref{def:nf-bounds}).
We believe this is because \cref{algo:all-net-sa} is 
more cache-friendly and does not involve stack operations.
Each algorithm is slower for \allnfreport than \allnfextract,
likely due to random-access requirements.
For \cref{algo:all-net-sa}, although DNA is the largest dataset, the method is
faster than on other datasets because there are far fewer strings with positive NF.
However, this is not the case for \cref{algo:sa-traversal}, 
because it has to spend much time on other operations, 
regardless of whether an occurrence is a net occurrence. 

Comparing these results to those of the \singlenf methods, 
observe that, for NYT, calculation of NF for each string with $\nf>0$ 
takes on average 12.6 microseconds, or 
a total of around 548.5 seconds for the complete set of such strings 
-- which is only possible if the set of these strings is known 
before the computation begins.
Using \allnf, these NF values can be determined in about 39 seconds 
for extraction and a further 65 seconds to report.

\begin{table}[t]
\begin{center}
\caption{
Asymptotic cost and average time (in seconds) for \allnf.
Build involves building an augmented suffix array including the off-the-shelf suffix array, the LCP array, and the LF mapping of text $T$.
See \cref{def:nf-bounds} for $L$ and $N$.
Recall that $N \leq L$ and $L \in \bigO(n \log \delta)$.
}%
\label{tab:all-nf-pre-report}%
\label{meaninglesslabel2}
\begin{tabular}{@{}l|c|c|rrrr@{}}
\multirow{2}{*}{Task}  & 
\multirow{2}{*}{Approach} & 
\multirow{2}{*}{Cost} &
\multicolumn{4}{c}{Average time} \\ 
& & & NYT & APW & XIE & DNA \\ \hline
Build & prior alg.              & $\bigO(n)$   & 186.7 & 61.3 & 39.0 & 219.6 \\ \hline
\multirow{2}{*}{\begin{tabular}[c]{@{}l@{}}Extract\end{tabular}} 
 & Alg.~\ref{algo:all-net-sa}   & $\bigO(L)$   & 38.8  & 13.6 & 8.6  & 6.6 \\
 & Alg.~\ref{algo:sa-traversal} & $\bigO(N)$   & 100.1 & 33.2 & 21.9 & 76.2 \\ \hline
\multirow{2}{*}{\begin{tabular}[c]{@{}l@{}}Report\end{tabular}} 
 & Alg.~\ref{algo:all-net-sa} & $\bigO(L+n)$   & 65.6  & 20.8 & 14.3 & 6.7 \\
 & Alg.~\ref{algo:sa-traversal} & $\bigO(n)$   & 231.6 & 81.2 & 52.6 & 77.4    
\end{tabular}
\end{center}
\end{table}

\section{Conclusion and Future Work}

Net frequency is a principled method for identifying which strings in a text 
are likely to be significant or meaningful.
However, to our knowledge there has been no prior investigation of 
how it can be efficiently calculated.
We have approached this challenge with fresh theoretical observations 
of NF's properties, which greatly simplify the original definition.
We then use these observations to underpin our efficient, practical algorithmic solutions,
which involve several augmentations to the suffix array,
including LF mapping, LCP array, and solutions to the colour range listing problem.
Specifically, our approach solves \singlenf in $\bigO(m + \sigma)$ time
and \allnf in $\bigO(n)$ time, 
where $n$ and $m$ are the length of the input text and a string, respectively, 
and  $\sigma$ is the size of the alphabet.
Our experiments on large texts showed that our methods are indeed practical.

We showed that there are at least three net occurrences in a Fibonacci word, $F_i$,
and verified that these are the only three for each $i$ until reasonably large $i$.
Proving there are exactly three is an avenue of future work.
We also proved that $\Omega(n) \leq N \leq L \leq \bigO(n \log \delta)$.
Closing this gap remains an open problem.
Another open question is determining a lower bound for \singlenf. 
We have focused on static text with exact NF computation in this work.
It would be interesting to address dynamic and streaming text and to
consider how approximate NF calculations might trade accuracy for time and space usage improvements.
Future research could also explore how 
\emph{bidirectional indexes}~\cite{
conf/cpm/2022/arakawa, conf/spire/2020/belazzougui, conf/esa/2013/belazzougui, conf/bibm/2009/lam} 
can be adapted for NF computation.

\newpage

\bibliographystyle{plainurl}
\bibliography{references}

\end{document}